\documentclass[11pt]{article}

\usepackage[margin=1in]{geometry}

\usepackage[T1]{fontenc}
\usepackage[utf8]{inputenc}
\usepackage{lmodern}

\usepackage{amsmath,amssymb,amsthm}

\usepackage{booktabs}
\usepackage{graphicx}
\usepackage{float}
\usepackage{needspace}

\usepackage[numbers,sort&compress]{natbib}

\usepackage[dvipsnames]{xcolor}
\usepackage[colorlinks=true,linkcolor=blue,citecolor=blue,urlcolor=blue]{hyperref}

\graphicspath{{Experiments/}}

\newcommand{\Description}[1]{}

\newif\ifeditorialnotes
\editorialnotestrue

\theoremstyle{plain}
\newtheorem{theorem}{Theorem}[section]
\newtheorem{lemma}[theorem]{Lemma}
\newtheorem{proposition}[theorem]{Proposition}
\newtheorem{corollary}[theorem]{Corollary}

\theoremstyle{definition}
\newtheorem{definition}[theorem]{Definition}

\newcommand{\Tor}{\mathcal{T}}
\newcommand{\dist}{\operatorname{dist}}

\newcommand{\Vol}{\operatorname{Vol}}
\newcommand{\gain}{\operatorname{gain}}
\newcommand{\plots}{final_allreduce_plots}

\title{SP-Allreduce: Symmetric Tree Schedules for Tori and Beyond}
\author{
Mikhail Cherniavskii,
Andrey Kupavskii,
Georgy Mironov,
Stanislav Moiseev
}
\date{}

\begin{document}

\maketitle

\begin{abstract}
Allreduce is a fundamental communication primitive in large-scale
distributed training, and torus interconnects remain widely deployed in
high-performance systems. We introduce SP-allreduce, a new allreduce algorithm for $d$-dimensional tori that exploits the symmetry of the topology and routes only along shortest paths.

We call an allreduce algorithm \emph{fully bandwidth-optimal} if it
incurs no bandwidth overhead, no multihop-induced traffic inflation, and
achieves perfect per-round load balance. SP-allreduce is based on a
family of carefully symmetrized shortest-path trees that guarantees full
bandwidth optimality in the $\alpha$--$\beta$ model. We prove that it uses the minimum possible number
of rounds among all fully bandwidth-optimal allreduce algorithms on
$d$-dimensional tori, and therefore the minimum running time in this class. In this sense, SP-allreduce refines Bucket
allreduce: it preserves Bucket's key strengths while reducing the number
of rounds by a factor of two.

We further extend our approach to broader symmetric topologies and to
communication trees of prescribed depths. Experimentally, the resulting
algorithms improve over the best compared state-of-the-art algorithms
for tori by up to $1.19\times$ on square tori and $1.65\times$ on rectangular tori
across the tested configurations. On HyperX and spine-leaf networks, we
demonstrate gains of up to $14.82\times$ and $1.72\times$, respectively,
over the best evaluated baselines.
\end{abstract}

\section{Introduction}\label{sec:intro}
The allreduce operation is a fundamental building block for both high-performance computing (HPC) and distributed deep learning. It is used to aggregate and redistribute model gradients and other global states, and is a major factor in the end-to-end runtime: prior work reports that collective operations, and allreduce in particular, may account for up to 40\% of training time in large distributed ML workloads and a substantial fraction of core-hours in production HPC systems~\citep{ben2019demystifying,chunduri2018characterization,klenk2020network,sapio2021scaling,wang2023topoopt, de2024swing}. Modern deep learning frameworks such as PyTorch expose allreduce as the main primitive for data-parallel training and heavily rely on optimized communication backends~\citep{li2020pytorch}. The importance of allreduce has motivated a long line of work on performance models, algorithmic variants, and implementations of allreduce and related collective operations.

Torus and torus-like networks are widely deployed in systems targeting both traditional HPC and machine-learning workloads. Multi-dimensional tori are used in IBM BlueGene/L and its successors~\citep{adiga2002overview}; Fujitsu's 6D Tofu-D interconnect is used in the Fugaku supercomputer,  Wisteria/BDEC-01 (Odyssey),
TOKI-SORA, several PRIMEHPC~FX1000-based systems, and Nagoya University's
Flow supercomputer~\citep{ajima2018tofu,primehpcfx1000,wisteria-bdec01,tokisora,flow}. 
 For ML workloads, vendors increasingly favor torus or torus-like fabrics: Google's TPU v5, v6 and  Ironwood systems use 2D/3D tori and optical reconfiguration to scale to thousands of accelerators~\citep{jouppi2023tpu,vahdat2023enabling, googleDocumentation, googleironwood2025}, AWS Trainium (Trn1, Trn2) uses a 2D torus for chip-to-chip connections~\citep{amazon2025trn2}, Graphcore IPU-PODs implement a dense IPU interconnect with torus structure~\citep{graphcore2025pod256}.

The allreduce collective operation combines values from all computing nodes and returns the result to every node. On general interconnects, its cost is typically modeled in the $\alpha$--$\beta$ message-cost model~\citep{rabenseifner2004optimization}. The latency term charges a fixed initiation cost $\alpha$ for each communication round; the bandwidth term is proportional to the amount of data transmitted, with $\beta$ denoting the time to transmit one bit along a link. For small messages, the data-transfer time is short, so the initiation costs of successive rounds can account for a large part of the total time. Reducing the number of rounds is therefore particularly important in this regime.

Ring allreduce~\citep{patarasuk2009bandwidth} transmits the minimum total volume, but requires $2(p-1)$ rounds for $p$ nodes. Classical recursive doubling and halving--doubling~\citep{rabenseifner2004optimization} require $O(\log p)$ communication rounds, but their communication partners may be separated by several physical links on a torus, increasing total link traffic. Different transmissions may also load the same physical links, making the distribution of traffic another factor in the bandwidth term. Section~\ref{sec:model} introduces parameters in the $\alpha$--$\beta$ model that account for communication volume, physical route lengths, and link-load imbalance.

On the one hand, the torus's limited degree and low bisection bandwidth constrain collective performance; on the other hand, its regular structure favors allreduce algorithms that are explicitly topology-aware. Indeed, there are several allreduce algorithms that were specifically designed for torus networks.
Bucket allreduce~\citep{barnett1995global,jain2010optimal,sack2015collective} is a canonical example: it preserves nearest-neighbor communication and is bandwidth-optimal on tori, but it requires a number of rounds that scales linearly with the torus diameter, which can hurt performance for small and medium message sizes.
Torus-aware algorithms such as Swing~\citep{de2024swing} and Trivance~\citep{juerss2026trivance} retain the logarithmic-round structure while reducing routing overhead through their choice of communication partners. Swing uses pairwise exchanges, whereas Trivance exchanges data with two peers at each step.

We introduce SP-allreduce, which on tori can be viewed as an improved version of Bucket. More generally, its tree-symmetrization approach applies to a broader class of topologies with sufficient symmetry.

We divide the input vector into chunks and assign them to copies of a base tree obtained using network symmetries. On equal-sided tori, translations, global reflection, and cyclic coordinate permutations form a subgroup of the topology's automorphism group that can map any directed physical link to any other. Applying these transformations to both the tree and its communication rounds distributes traffic equally across links in every round. We also adapt the construction to HyperX and spine-leaf networks.

\subsection{Our Contributions}
We introduce the SP-allreduce algorithm for torus networks and establish its bandwidth properties, analyze the number of rounds, and extend the construction:
\begin{enumerate}
\item SP-allreduce is \emph{fully bandwidth-optimal}, i.e., it incurs no bandwidth overhead, introduces no multihop-induced traffic inflation (no additional link traffic due to multi-hop routing), and achieves perfectly balanced link utilization in every round. The symmetry argument also applies to other communication trees and sufficiently symmetric topologies.
\item SP-allreduce completes in $2d\lfloor n/2\rfloor$ rounds on a $d$-dimensional $n\times\cdots\times n$ torus, approximately a factor of two fewer than Bucket~\citep{jain2010optimal}. We prove that this count is optimal among all fully bandwidth-optimal schemes for odd $n$ and exceeds the optimum by one round for even $n$. The proof derives the structure of the communication scheme from the minimum transmission volume and one-hop routing.
\item We extend the approach to communication trees of prescribed depth, introducing $k$-SP-allreduce. A tree of depth $k$ gives $k$ reduction rounds followed by $k$ dissemination rounds. Allowing a tree edge to connect nonadjacent network nodes can reduce this depth, while its physical route may increase the traffic. Experimentally, these schedules outperform the best compared torus algorithm over a wider range of message sizes.
\item Experimentally, the resulting algorithms improve over the best compared state-of-the-art algorithms for tori by up to $1.19\times$ on square tori and $1.65\times$ on rectangular tori across the tested configurations. On HyperX and spine-leaf networks, we demonstrate gains of up to $14.82\times$ and $1.72\times$, respectively, over the best evaluated baselines.
\end{enumerate}

\section{Model and Background}\label{sec:model}
\subsection{Allreduce and its representation}
Allreduce combines values from all computing nodes and returns the combined value to every node. Let $p$ be the number of nodes, each holding a vector of $N$ bits. The reduction combines corresponding vector elements using an associative and commutative operator. A common implementation consists of two stages: reduce-scatter, during which the values are collected and combined, and allgather, during which the reduced values are communicated back to every node.

We view an allreduce algorithm at two levels. At the \emph{abstract} level, an allreduce scheme specifies, for each round $s\in\{1,\ldots,S\}$, a set (or multiset) $\mathcal P_s$ of directed transmissions
\[
 u\stackrel{M}{\longrightarrow}v,
\]
meaning that node $u$ sends a message $M$ of size $|M|$ to node $v$ in round~$s$. Transmissions in a round use data available at its beginning; received values are available for subsequent rounds. The total amount of data communicated by the abstract scheme is
\[
 \Vol_{\mathrm{abs}}=
 \sum_{s=1}^{S}\sum_{(u\stackrel{M}{\longrightarrow}v)\in\mathcal P_s}|M|,
\]
and the average amount of data sent per node is $V_{\mathrm{avg}}=\Vol_{\mathrm{abs}}/p$.

At the \emph{network} level, each transmission must be \emph{mapped} to the topology, i.e., implemented by routing the corresponding data from $u$ to $v$ along a physical path. The communicating nodes need not be adjacent in the topology. A message contributes its size to the load of every physical link on its route; different transmissions can contribute to the load of the same link.

Represent the physical network by a directed graph $G=(V,E)$, with the two directions of each connection represented by separate edges. On a torus, every vertex is a computing node. On a spine-leaf network, input vectors reside at hosts and paths may pass through forwarding switches. In the regular case with $p$ computing nodes of out-degree $D$, $|E|=pD$. All links have the same bandwidth, and $\beta$ is the time to transmit one bit along one link. We use a multiport model in which a node can use its incident links concurrently.

\subsection{The \texorpdfstring{$\alpha$--$\beta$}{alpha--beta} model and bandwidth deficiencies}
The classical cost model for an allreduce operation is the $\alpha$--$\beta$ model~\citep{rabenseifner2004optimization}. Its multiport form is
\[
 T_{\mathrm{allreduce}}=\alpha S+\beta\frac{V_{\mathrm{avg}}}{D},
\]
where $S$ is the number of rounds and $\alpha$ is the latency incurred by initiating a round. This expression treats the communication volume as evenly distributed over the available links.

To account for the mapping of transmissions to the topology, let $\ell_s(e)$ be the number of bits carried by directed physical link $e$ in round~$s$. We approximate the bandwidth cost of a round by its maximum link load and sum these costs over the rounds:
\begin{equation}\label{eq:roundcost}
 Q=\sum_{s=1}^{S}\max_{e\in E}\ell_s(e),
 \qquad T_{\mathrm{model}}=\alpha S+\beta Q.
\end{equation}
The following classical bound provides the reference communication volume~\citep{patarasuk2009bandwidth}.
\begin{proposition}\label{prop:minvol}
Let $G$ be a network with $p$ nodes. Assume that vectors of $N$ bits at each node need to be reduced. Then the average number of bits a node needs to send is at least
\begin{equation}\label{eq:minvol}
 V_{\min}=N\frac{2p-2}{p}.
\end{equation}
\end{proposition}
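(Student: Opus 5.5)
The lower bound is $\Vol_{\mathrm{abs}}\ge 2(p-1)N$; dividing by $p$ gives \eqref{eq:minvol}. I would prove it bit by bit. I would fix one bit position $j\in\{1,\dots,N\}$ and show that the abstract scheme must transmit at least $2(p-1)$ bits of information ``about position $j$''. Summing over $j$ then gives the bound. The standard route, following Patarasuk and Yuan, is to view each message $M$ as a collection of bits, each a function of the data available at the sender. I would adopt the usual accounting convention: a transmitted bit is charged to the position it concerns, i.e.\ messages consist of (partial reductions of) elements at fixed positions, so $|M|=\sum_j |M_j|$. This is the convention under which the proposition is meant. For a fully general encoding one would instead use an information/counting argument with generic inputs, but I would state that the bound is for schemes that operate elementwise.

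For a fixed position $j$, I would split the transmissions into two phases according to causality.

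\emph{Reduction phase.} Consider the first moment at which some node $r$ holds a value depending on all $p$ inputs at position $j$. Before that, information from each of the $p-1$ other nodes must reach $r$, and each transmitted unit is a single (partial) reduction value. Build the ``dependency'' graph whose edges are the transmissions of position-$j$ data that contribute to that value. Every node other than $r$ must have an outgoing contributing transmission, since otherwise its input never leaves it. This gives at least $p-1$ transmissions of one element each. Here I would argue that the only information leaving a node is via its own sends, so distinct nodes need distinct sends.

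\emph{Dissemination phase.} Every node $v\ne r$ must eventually hold the full reduction at position $j$. Its own input is not the full result, so at the first time $v$ holds a value depending on all $p$ inputs, it must have received a transmission that depends on all $p$ inputs. This gives at least one receive per node $v\neq r$, hence $p-1$ more transmissions.

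The main obstacle is showing that these two sets of $p-1$ transmissions are disjoint, and that the argument handles nodes other than $r$ reaching completeness ``in parallel''. I would use a cleaner charging. Each node $v$ sends at least one message at position $j$ whose value does not yet depend on all inputs; otherwise $v$'s input could not reach anyone before it was already included, which is circular for the first node to become complete. Each node also receives at least one message that does depend on all inputs, except possibly the very first node(s) to become complete, of which there is at least one. A message is either ``complete'' or ``incomplete'', so the two families are disjoint.

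The incomplete-send count needs one more check. Every node $v$ other than the first complete node $r$ must have sent an incomplete message: before $r$ is complete, no complete value exists, so $v$'s contribution reached $r$ through incomplete sends originating at $v$. This gives $\ge p-1$ incomplete and $\ge p-1$ complete transmissions per position. Multiplying by the unit size and summing over $N$ positions yields $\Vol_{\mathrm{abs}}\ge 2(p-1)N$, and dividing by $p$ gives the claim.
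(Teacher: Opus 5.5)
Your proposal has a genuine gap in the dissemination count, and the complete/incomplete split does not repair it. The incomplete-send count is fine. The false step is ``at the first time $v$ holds a value depending on all $p$ inputs, it must have received a transmission that depends on all $p$ inputs.'' A node holding a partial aggregate over $A$ becomes complete by receiving an \emph{incomplete} aggregate over $V\setminus A$, or by merging several incomplete messages in one round.

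Recursive doubling is the standard counterexample. In its last round every node exchanges complementary partial aggregates with its partner, so all $p$ nodes become complete simultaneously and no complete message is ever sent. The family of complete messages is then empty, and your charging yields only $p-1$ transmissions. Your hedge ``except possibly the very first node(s)'' gives only $p-r$ complete receives when $r$ nodes finish in the same round, and nothing in the argument makes up the missing $r-1$. The bound $2p-2$ is still true for such schedules, but only because they send extra incomplete messages, which your accounting never counts. Disjointness by \emph{content} is therefore the wrong mechanism.

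The paper separates the two families by \emph{time} instead. Serialize the deliveries within each round in an arbitrary order, keeping every payload equal to the sender's state at the beginning of that round; the final states are unchanged because updates are unions. There is then a unique first delivery after which some node $v$ is complete.
\begin{itemize}
\item Each $u\neq v$ must have sent at least once at or before that delivery, since its input has reached $v$. This gives $p-1$ deliveries with distinct senders.
\item Each $u\neq v$ is still incomplete at that instant, so it needs at least one incoming delivery strictly afterwards, of any content. This gives $p-1$ further deliveries with distinct receivers.
\end{itemize}
The two groups are disjoint by construction. Simultaneous completions in the first-complete round are handled automatically, because the later completing deliveries of that round fall into the second group. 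Replacing your complete/incomplete classification with this before/after split fixes the proof.
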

\begin{proof}
Take one position of the input vectors. Order the message deliveries within each round arbitrarily, keeping their contents fixed as at the beginning of that round. Let $v$ be the first node to obtain the complete result in this ordering. Each other node must have sent its contribution at least once, giving at least $p-1$ transmissions. After $v$ obtains the result, each of the other $p-1$ nodes still needs at least one incoming transmission to obtain it. Thus at least $2p-2$ transmissions are necessary for each position. In total, at least $2N(p-1)$ bits must be transmitted, or $2N(p-1)/p$ bits on average per node.
\end{proof}
A tree attains this bound with $p-1$ reduction transmissions and $p-1$ dissemination transmissions for each chunk.

A fruitful point of view on allreduce that this proposition suggests is to model it as a collection of graphs (typically, trees), where each graph records the sequence of communications used to reduce a given vector position. We distinguish three factors in the bandwidth cost of the resulting scheme.

\begin{definition}[Bandwidth deficiency]\label{def:psi}
The bandwidth deficiency is
\begin{equation}\label{eq:psi}
 \Psi=\frac{V_{\mathrm{avg}}}{V_{\min}}
      =\frac{\Vol_{\mathrm{abs}}}{2N(p-1)}\geq1.
\end{equation}
It measures the communication volume of the abstract scheme relative to the minimum.
\end{definition}

\begin{definition}[Multi-hop path deficiency]\label{def:theta}
Fix a mapping of the abstract allreduce scheme into the topology. For each transmission $(u\stackrel{M}{\longrightarrow}v)\in\mathcal P_s$, let $P_s(u,v;M)$ be the chosen physical path and let $\mathrm{len}(P_s(u,v;M))$ denote its length in physical edges. The multi-hop path deficiency is
\begin{equation}\label{eq:theta}
 \Theta=
 \frac{\displaystyle\sum_{s=1}^{S}\sum_{(u\stackrel{M}{\longrightarrow}v)\in\mathcal P_s}
 |M|\,\mathrm{len}(P_s(u,v;M))}
 {\displaystyle\sum_{s=1}^{S}\sum_{(u\stackrel{M}{\longrightarrow}v)\in\mathcal P_s}|M|}
 =\frac{\Vol_{\mathrm{phys}}}{\Vol_{\mathrm{abs}}}\geq1,
\end{equation}
where $\Vol_{\mathrm{phys}}=\sum_s\sum_e\ell_s(e)$ counts the volume over physical links. The two expressions are equivalent because each message contributes its size to every link on its route. Thus $\Theta$ is the message-size-weighted average path length, and $\Theta=1$ exactly when every transmission uses a path of length one, between adjacent nodes.
\end{definition}

\Needspace{9\baselineskip}
\begin{definition}[Load-balancing deficiency]\label{def:omega}
Define the average link load in round $s$ as $\bar\ell_s=|E|^{-1}\sum_e\ell_s(e)$. The load-balancing deficiency is
\begin{equation}\label{eq:omega}
 \Omega=\frac{\sum_s\max_e\ell_s(e)}{\sum_s\bar\ell_s}\geq1.
\end{equation}
It measures the multiplicative increase in the modeled bandwidth cost caused by per-round link-load imbalance. In particular, equal total traffic over the full execution does not imply equal traffic in each round.
\end{definition}

For a regular topology with $|E|=pD$, substituting these definitions into Equation~\eqref{eq:roundcost} gives
\begin{equation}\label{eq:factorized}
 T_{\mathrm{model}}=\alpha S+
 \beta\frac{2N(p-1)}{pD}\Psi\Theta\Omega.
\end{equation}
Approximating $(p-1)/p$ by one yields the form $\alpha S+\beta(2N/D)\Psi\Theta\Omega$. We retain the finite-$p$ factor when deriving bounds.

\begin{definition}[Full bandwidth optimality]
An allreduce scheme is \emph{fully bandwidth-optimal} if $\Psi=\Theta=\Omega=1$: it transmits the minimum volume, incurs no additional traffic from multi-hop routing, and balances link loads perfectly in every round.
\end{definition}
For schemes with these properties, the bandwidth term is fixed. Minimizing their $\alpha$--$\beta$ execution time therefore reduces to minimizing the number of rounds.

\subsection{Torus notation}
The equal-sided torus $\Tor_n^d$ has vertex set $\mathbb Z_n^d$, with neighbors differing by $+1$ or $-1$ in one coordinate. For $n\geq3$, its degree is $D=2d$ and its diameter is $H=d\lfloor n/2\rfloor$. For the two-dimensional rectangular torus $\Tor_{m,n}$, the diameter is $H=\lfloor m/2\rfloor+\lfloor n/2\rfloor$.

\subsection{Related work}\label{sec:related}
We briefly review representative allreduce baselines for torus and torus-like interconnects, comparing their communication rounds, transmitted volume, physical routes, and link utilization. Appendix~\ref{app:baselinedeficiencies} gives the quantitative comparison in our deficiency model and its derivations.

\paragraph{Hamiltonian ring allreduce.}
Ring allreduce~\citep{patarasuk2009bandwidth} performs reduce-scatter followed by allgather along Hamiltonian cycles in $2(p-1)$ rounds. When the cycles use physical links, it achieves $\Psi=\Theta=1$. A balanced multiport implementation runs several rings on separate chunks and distributes traffic equally across physical links in every round, giving $\Omega=1$. Such an implementation is \emph{fully bandwidth-optimal}. Its main drawback is the linear number of rounds.

\paragraph{Bucket algorithm.}
Bucket~\citep{jain2010optimal,sack2015collective} performs reduce-scatter and allgather along the coordinate rings of a torus, using only nearest-neighbor exchanges. On $\Tor_n^d$, its balanced multiport schedule is \emph{fully bandwidth-optimal}, with $\Psi=\Theta=\Omega=1$, and requires $2d(n-1)$ rounds. SP-allreduce preserves these bandwidth properties while approximately halving the number of rounds.

\paragraph{Swing.}
Swing~\citep{de2024swing} refines doubling-like allreduce algorithms by choosing partners with shorter physical routes on tori, thereby reducing the multi-hop path deficiency $\Theta$. For $n=2^m$, SwingL requires $\log_2 p$ rounds, while SwingB performs reduce-scatter and allgather in $2\log_2 p$ rounds. SwingB achieves minimum abstract communication volume, with $\Psi=1$, whereas SwingL incurs additional abstract traffic.

\paragraph{Trivance.}
Trivance~\citep{juerss2026trivance} extends logarithmic-round allreduce schemes by using simultaneous exchanges with two peers to combine contributions from three groups per step. For $n=3^t$, TrivanceL requires $\log_3 p$ rounds, while TrivanceB performs reduce-scatter and allgather in $2\log_3 p$ rounds. TrivanceB achieves minimum abstract communication volume, with $\Psi=1$, whereas TrivanceL incurs additional abstract traffic.

\paragraph{Other approaches.}
A complementary line of work aims to \emph{synthesize} collective schedules automatically for a given accelerator fabric, rather than relying on a fixed closed-form algorithm. Most such systems generate \emph{step schedules} that explicitly track data dependencies across synchronized rounds and optimize link usage within each round~\citep{cai2021synthesizing,shah2023taccl,cao2025syccl}. To keep the search tractable, these approaches use communication sketches in TACCL or symmetry-driven reductions in SyCCL~\citep{shah2023taccl,cao2025syccl}. Greedy synthesizers such as TACOS and MultiTree, as well as tree-based constructions such as Blink, provide further approaches to topology-aware collective design~\citep{won2024tacos,huang2021communication,wang2020blink}.

ForestColl~\citep{zhao2024forestcoll} constructs tree-flow schedules via spanning-tree packing, aiming to maximize throughput by balancing aggregate link load over the entire collective rather than enforcing perfectly balanced utilization in each synchronized round. Our analysis instead tracks the round structure and per-round physical-link loads.

Finally, a separate line of work co-designs the interconnect itself to improve collective performance~\citep{zhao2025efficient}. This direction is orthogonal to our setting, where the topology is fixed. SP-allreduce uses an explicit construction with formal bandwidth and round-count guarantees on equal-sided tori, and adapts the same tree-symmetrization approach to other topologies.

\section{SP-allreduce: Construction and Symmetry}\label{sec:symmetry}
\subsection{SP-allreduce construction}
We first construct one tree and then propagate it using torus symmetries to balance the load.

\begin{description}
\item[Step 1 (Data partition).]
Each node splits its data into $2d\cdot n^d$ equal chunks, indexed by triples
\[
 (x,\varepsilon,j)\in\mathbb Z_n^d\times\{+1,-1\}\times\{0,\ldots,d-1\}.
\]
Here $x$ specifies the root at which the chunk will be reduced, while $\varepsilon$ and $j$ specify the reflection and coordinate permutation used to construct its tree.

\item[Step 2 (Construction of allreduce trees).]
Fix the root $\boldsymbol 0=(0,\ldots,0)$ and construct any BFS (breadth-first search) tree $T_0$ of the torus rooted at $\boldsymbol 0$. Figure~\ref{fig:basetrees}(a) shows an example on $8\times8$. Orient all edges toward the root. Construct the other trees as follows.
\begin{description}
\item[Reflection.]
Set $T_0^+=T_0$ and obtain $T_0^-$ by reflecting every coordinate. Replace each vertex $(u_1,\ldots,u_d)$ by $(-u_1,\ldots,-u_d)$, with coordinates taken modulo $n$. Thus an edge $u\to v$ becomes $-u\to -v$.
\item[Cyclic coordinate shifts.]
Let $\rho(u_1,\ldots,u_d)=(u_2,\ldots,u_d,u_1)$. Apply the $d$ transformations $\rho^j$, $j=0,\ldots,d-1$, to each of $T_0^+$ and $T_0^-$, obtaining the trees $T_0^{\varepsilon,j}$.
\item[Translation to all roots.]
For each node $x\in\mathbb Z_n^d$, translate each $T_0^{\varepsilon,j}$ by $x$. An edge $u\to v$ becomes $u+x\to v+x$, with all coordinates taken modulo $n$. Denote the translated tree by $T_x^{\varepsilon,j}$.
\end{description}
As a result, for each root $x$ we obtain $2d$ aggregation trees spanning the entire torus. The chunk indexed by $(x,\varepsilon,j)$ is assigned to $T_x^{\varepsilon,j}$.

\item[Step 3 (Reduce-scatter phase).]
The depth of each tree equals the torus diameter $H=d\lfloor n/2\rfloor$. Communication proceeds in $H$ rounds. For each chunk $(x,\varepsilon,j)$, in round $i=1,\ldots,H$, every vertex at tree depth $H-i+1$ sends its partial aggregate to its parent. The parent combines the received values with its own contribution. At the end of this phase, root $x$ holds the reduced chunk $(x,\varepsilon,j)$ from all nodes.

\Needspace{6\baselineskip}
\item[Step 4 (Allgather phase).]
The allgather phase reverses the reduce-scatter: for each index $(x,\varepsilon,j)$, root $x$ sends the reduced chunk back to all nodes along the same tree in the opposite direction. This takes another $H$ rounds. Every node then holds all reduced chunks and thus the allreduce result.
\end{description}

Communication along all trees happens in parallel. In round $i$, along each directed edge $v\to w$, vertex $v$ sends together the chunks assigned to trees that use this edge in that round.

\begin{figure}[tbp]
 \centering
 \includegraphics[width=\linewidth]{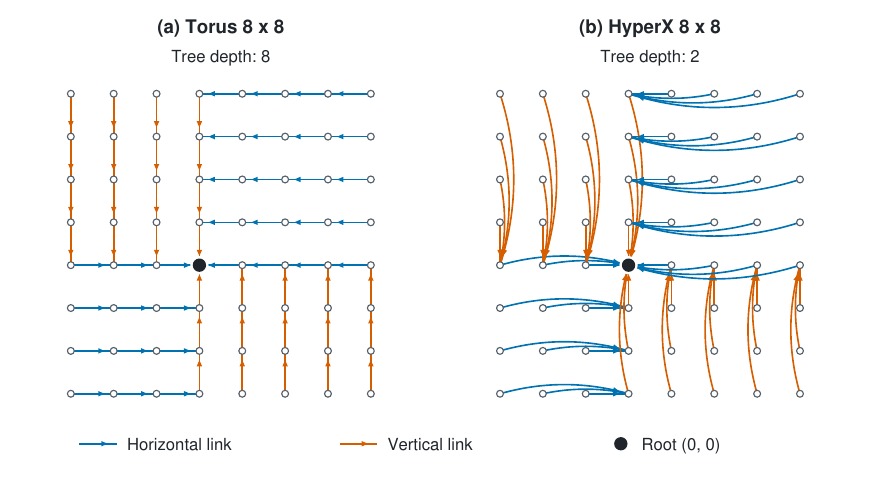}
 \Description{Base trees on an 8 by 8 torus and HyperX. The torus tree uses adjacent grid nodes and has depth eight. The HyperX windmill connects the root to its row and column and all other nodes to this cross, giving depth two. Arrows point toward the root.}
 \caption{Examples of base trees on $8\times8$: (a) a BFS tree on a torus; (b) the HyperX windmill used in Section~\ref{sec:hyperxconstruction}. Only tree edges are shown, with arrows in the reduction direction. Each curved edge in (b) is a single physical link between nodes in the same row or column.}
 \label{fig:basetrees}
\end{figure}

In algebraic terms, Step 2 applies the subgroup
\[
 \Gamma=\{u\mapsto x+\varepsilon\rho^j(u):
 x\in\mathbb Z_n^d,\ \varepsilon\in\{+1,-1\},\ 0\leq j<d\}
\]
of the torus automorphism group. It has $2dn^d$ elements for $n\geq3$. Translations move a directed edge to any position, cyclic coordinate shifts move it to any axis, and reflection reverses its direction. These transformations can map any directed physical link to any other. We next express this construction in general terms and prove that it balances the load in every round.

\subsection{A communication tree and its symmetric copies}
A base construction consists of a rooted spanning tree $T$ over the nodes, a physical path $R_{uv}$ for every tree edge, and a round assignment $\sigma$. The tree is logical: its adjacent vertices need not be adjacent in $G$. In reduce-scatter, edges point toward the root; in allgather, the aggregate follows the same tree and physical paths in reverse.

For a tree of edge depth $h$, reduce-scatter takes $h$ rounds. In round $s$, vertices at depth $h-s+1$ send to their parents. Thus each vertex receives and reduces the contributions from its children before sending its aggregate to its parent. Reversing these rounds gives $h$ allgather rounds.

Let $\Gamma\leq\operatorname{Aut}(G)$ be a subgroup of the automorphism group of $G$. For each $\gamma\in\Gamma$, apply $\gamma$ to the tree and its physical routes, retaining the round assigned to each transmission. In particular, a route $(v_0,v_1,\ldots,v_m)$ becomes $(\gamma(v_0),\gamma(v_1),\ldots,\gamma(v_m))$. The resulting construction is denoted by $(\gamma T,\gamma R,\gamma\sigma)$.

Split each input vector into $|\Gamma|$ equal chunks, one for each $\gamma\in\Gamma$, and run the corresponding tree schedules in parallel. Chunks with the same sender, receiver, round, and physical route can be packed into a single message.

\subsection{Per-round balance}
The \emph{orbit} of a directed physical edge $e$ is the set $\{\gamma(e):\gamma\in\Gamma\}$. Thus two edges belong to the same orbit if a transformation in $\Gamma$ maps one to the other. The action is \emph{directed-edge transitive} if all directed physical edges belong to a single orbit.

\begin{lemma}[Orbit balance]\label{lem:balance}
For the equal-chunk family indexed by $\Gamma$, links in the same directed-edge orbit carry equal volume in each round. If $\Gamma$ is transitive on all directed physical links, then $\Omega=1$.
\end{lemma}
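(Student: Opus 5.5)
We write the load on a directed link in round $s$ as a sum over group elements, and then reindex the sum using the group structure. Each chunk carries $N/|\Gamma|$ bits. For the base construction $(T,R,\sigma)$, let $c_s(f)$ be the number of tree transmissions scheduled in round $s$ whose physical route $R_{uv}$ traverses the directed link $f$, counted with multiplicity. By construction, the copy $\gamma$ routes transmission $\gamma u\to\gamma v$ along $\gamma R_{uv}$ in the same round $s$. The physical link $e$ lies on $\gamma R_{uv}$ exactly when $\gamma^{-1}(e)$ lies on $R_{uv}$, because $\gamma$ is a graph automorphism and so maps directed edges bijectively to directed edges. Packing chunks into one message does not change bit counts. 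Hence
\[
\ell_s(e)=\frac{N}{|\Gamma|}\sum_{\gamma\in\Gamma}c_s\bigl(\gamma^{-1}(e)\bigr).
\]
The same formula holds for the allgather rounds, using the reversed routes and the reversed round indices.

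Next suppose $e'=\delta(e)$ for some $\delta\in\Gamma$. Then
\[
\ell_s(e')=\frac{N}{|\Gamma|}\sum_{\gamma}c_s\bigl(\gamma^{-1}\delta e\bigr).
\]
The substitution $\gamma'=\delta^{-1}\gamma$ is a bijection of $\Gamma$, since $\Gamma$ is a group. It turns this sum into $\ell_s(e)$. So links in the same orbit carry equal load in every round, which is the first claim.

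For the second claim, directed-edge transitivity means all directed links form one orbit. Then $\ell_s(e)$ is constant in $e$ for each $s$, so $\max_e\ell_s(e)=\bar\ell_s$. Summing over $s$ and using Definition~\ref{def:omega} gives $\Omega=1$; if some $\bar\ell_s=0$, the corresponding terms vanish on both sides.

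The only delicate step is the bookkeeping in the first display. One must check that applying $\gamma$ keeps the round index fixed, so that loads from different rounds are never mixed, and that multiplicities are handled correctly, since a route may use a link more than once. Both follow directly from the definition of $(\gamma T,\gamma R,\gamma\sigma)$.
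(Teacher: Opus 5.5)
Your proposal is correct and follows essentially the same route as the paper: you write $\ell_s(e)$ as the chunk size times $\sum_{\gamma\in\Gamma}$ of the base round-$s$ load at $\gamma^{-1}(e)$, reindex by left multiplication in $\Gamma$ to get constancy on orbits, and conclude that a single orbit forces $\max_e\ell_s(e)=\bar\ell_s$ in every round, hence $\Omega=1$. Your remarks on route multiplicities, the fixed round index, and the allgather phase are extra care within the same argument rather than a different approach.
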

\begin{proof}
Let $a_s(e)$ be the base load on physical link $e$ in round $s$, for a chunk of unit size. With chunk size $c=N/|\Gamma|$, the transformed family has
\[
 \ell_s(e)=c\sum_{\gamma\in\Gamma}a_s(\gamma^{-1}e).
\]
For any $\eta\in\Gamma$, replacing $\gamma$ by $\eta\gamma$ shows that $\ell_s(\eta e)=\ell_s(e)$. Thus loads are constant within each orbit in every round. If there is one orbit, the maximum equals the average in each round, which gives $\Omega=1$ by Equation~\eqref{eq:omega}.
\end{proof}

This argument applies to any network with such a group of automorphisms and to any base tree with specified routes and rounds. In particular, the base tree need not be a BFS tree of the topology.

\paragraph{Several edge orbits.}\label{sec:orbitweights}
If there is more than one edge orbit, symmetrization balances the load within each orbit, but the loads of different orbits can differ. It therefore does not by itself guarantee $\Omega=1$.

\section{Fully Bandwidth-Optimal Allreduce on Tori}\label{sec:torus}
\subsection{Bandwidth guarantees}
\begin{theorem}[Equal-sided torus schedule]\label{thm:torus}
On $\Tor_n^d$, $n\geq3$, SP-allreduce uses $2dn^d$ chunks and $2H=2d\lfloor n/2\rfloor$ rounds. It achieves $\Psi=\Theta=\Omega=1$, and its $\alpha$--$\beta$ running time is
\begin{equation}\label{eq:toruscost}
 T_{\mathrm{SP}}=2\alpha H+\beta\frac{2N(p-1)}{pD},
 \qquad p=n^d,\quad D=2d.
\end{equation}
\end{theorem}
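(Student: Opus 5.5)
The argument verifies the four claims in turn: chunk count, round count, $\Psi=\Theta=1$, and $\Omega=1$. The cost formula then follows from Equation~\eqref{eq:factorized}.

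\emph{Counting.} The chunk count is immediate from Step~1: there are $n^d\cdot 2\cdot d$ triples $(x,\varepsilon,j)$. We must also check that the map $(x,\varepsilon,j)\mapsto(u\mapsto x+\varepsilon\rho^j(u))$ is injective when $n\ge3$, so that $\Gamma$ really has $2dn^d$ elements. Evaluating at $u=\boldsymbol 0$ recovers $x$. Evaluating at $u=e_1$ recovers $\varepsilon e_{\cdot}$ at a coordinate determined by $j$. Since $+1\neq-1$ in $\mathbb Z_n$ for $n\ge3$, this determines both $\varepsilon$ and $j$.

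\emph{Rounds.} A BFS tree rooted at $\boldsymbol 0$ has depth equal to the eccentricity of $\boldsymbol 0$. By vertex transitivity this is the diameter $H=d\lfloor n/2\rfloor$: the distance is $\sum_i\min(|u_i|,n-|u_i|)$, which is maximized by taking every coordinate equal to $\lfloor n/2\rfloor$. Automorphisms preserve depths, so every $T_x^{\varepsilon,j}$ has depth $H$. The reduce-scatter therefore finishes in $H$ rounds and the allgather in another $H$, and the trees run in parallel. Correctness is routine: when a vertex at depth $k$ sends in round $H-k+1$, it has already received from all its children, which are at depth $k+1$ and sent in round $H-k$.

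\emph{$\Psi$ and $\Theta$.} Each tree is spanning with $p-1$ edges, and each edge carries its chunk of size $N/(2dn^d)$ once in each phase. Hence $\Vol_{\mathrm{abs}}=2dn^d\cdot 2(p-1)\cdot N/(2dn^d)=2N(p-1)$, which gives $\Psi=1$ by Definition~\ref{def:psi}. BFS tree edges join adjacent vertices, and automorphisms preserve adjacency, so every transmission is routed on a single physical link and $\Theta=1$. One point needs care: packing chunks into one message only merges transmissions and does not change volumes.

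\emph{$\Omega$.} This is the only step with real content, and it reduces to Lemma~\ref{lem:balance}. It remains to show that $\Gamma$ is transitive on directed links. A directed link has the form $u\to u+\delta e_i$ with $\delta=\pm1$. Translating by $-u$ moves it to $\boldsymbol 0\to\delta e_i$. Applying a suitable power of $\rho$, which permutes the unit vectors cyclically, moves it to $\boldsymbol 0\to\delta e_1$. If $\delta=-1$, applying $\varepsilon=-1$ sends it to $\boldsymbol 0\to e_1$. Composites of elements of $\Gamma$ stay in $\Gamma$ because $\Gamma$ is closed under composition: $\rho$ commutes with negation, and conjugating a translation by $\rho$ gives another translation. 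So all directed links lie in one orbit, and Lemma~\ref{lem:balance} gives $\Omega=1$ in both phases. The allgather is the reversed schedule, which is again $\Gamma$-symmetric, so the lemma applies equally to it.

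\emph{Cost.} Substituting $S=2H$ and $\Psi=\Theta=\Omega=1$ into Equation~\eqref{eq:factorized} with $|E|=pD$ and $D=2d$ gives Equation~\eqref{eq:toruscost}. The main obstacle is the transitivity argument, together with checking that the round assignment is preserved under $\Gamma$ so that the lemma's per-round hypothesis holds. Everything else is bookkeeping.
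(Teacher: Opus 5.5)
Your proposal is correct and follows the paper's proof. The paper likewise counts the $p-1$ tree edges per phase to get $\Psi=1$, uses one-hop BFS edges for $\Theta=1$, applies Lemma~\ref{lem:balance} via the directed-edge transitivity of $\Gamma$ for $\Omega=1$, takes BFS depth equal to the eccentricity $H$ for the round count, and substitutes into Equation~\eqref{eq:factorized}. Your additional checks fill in details the paper only asserts: injectivity of the parametrization of $\Gamma$, the explicit transitivity argument with closure under composition, and preservation of the level schedule under automorphisms.
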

\begin{proof}
For each chunk, the reduction and dissemination each traverse exactly the $p-1$ edges of its tree once. Summing chunk sizes gives $\Vol_{\mathrm{abs}}=2N(p-1)$ and hence $\Psi=1$. Every tree edge is one physical hop, so $\Theta=1$. The group is directed-edge transitive, so Lemma~\ref{lem:balance} gives $\Omega=1$. Breadth-first tree depth equals root eccentricity, which is $H$ at every vertex of the torus. The prescribed level schedule therefore uses $H$ rounds per phase. Substitution in Equation~\eqref{eq:factorized} proves the cost expression.
\end{proof}

\subsection{Optimality of SP-allreduce}\label{sec:optimality}
We analyze each position of the input vector separately. Fix a position $i$, which we call an elementary chunk in this argument. Its complete aggregate is the result of reducing the contributions at position $i$ from all $p$ nodes. For a fully bandwidth-optimal scheme, $\Psi=1$ implies exactly $2p-2$ transmissions for this chunk, and $\Theta=1$ implies that every transmission connects adjacent nodes.

Two structural facts apply to this chunk. In the first round that produces its complete aggregate, the result becomes available either at a single node or simultaneously at two adjacent nodes exchanging complementary partial aggregates (Lemma~\ref{lem:equality}, \emph{Structure in the equality case}). After that round, every transmission of this chunk carries its complete aggregate to a node that has not yet obtained it (Lemma~\ref{lem:dissemination}, \emph{Dissemination after the first complete aggregate}). Since allreduce completes only when every node has the reduced value at every vector position, a lower bound on the rounds required for this chunk also applies to the entire operation.

\begin{theorem}[Optimality of SP-allreduce]\label{thm:optimality}
On the equal-sided torus $\Tor_n^d$, $n\geq3$, let $S_{\min}$ be the minimum number of rounds among all fully bandwidth-optimal allreduce schemes. With $H=d\lfloor n/2\rfloor$,
\[
 S_{\min}=\begin{cases}2H,&n\text{ odd},\\2H-1,&n\text{ even}.\end{cases}
\]
SP-allreduce uses $2H$ rounds. Its $\alpha$--$\beta$ running time is therefore minimum in this class for odd $n$ and exceeds the minimum by $\alpha$ for even $n$.
\end{theorem}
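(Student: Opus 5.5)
We prove the lower bound chunk by chunk, using the two structural facts stated just before the theorem. We also need a matching upper construction of $2H-1$ rounds for even $n$. The claim that SP-allreduce uses $2H$ rounds, and the running-time comparison, follow from Theorem~\ref{thm:torus}. The reason is that every fully bandwidth-optimal scheme has the same bandwidth term $\beta\,2N(p-1)/(pD)$, so the comparison reduces to the round count.

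\textbf{Lower bound.} Fix an elementary chunk and let $r$ be the first round in which some node holds its complete aggregate. By Lemma~\ref{lem:equality}, in round $r$ the aggregate appears either at a single node $v$, or simultaneously at two adjacent nodes $v,v'$ exchanging complementary partial aggregates.

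\emph{Reduction phase.} Since $\Theta=1$, information moves one hop per round. The contribution of a node at distance $t$ from $v$ can reach $v$ no earlier than the end of round $t$. Every vertex has eccentricity $H$, so some node lies at distance $H$ from $v$, and hence $r\ge H$.

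\emph{Dissemination phase.} By Lemma~\ref{lem:dissemination}, after round $r$ every transmission carries the complete aggregate to a new node, again one hop per round. Consider the single-node case. A node at distance $H$ from $v$ needs $H$ further rounds, so $S\ge 2H$.

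\emph{Two-node case.} Here $v,v'$ are adjacent, and both the reduction and the dissemination are measured from the edge $\{v,v'\}$. The node $v$ must receive contributions from every $w$ with $\dist(w,v')<\dist(w,v)$ via $v'$. So $r\ge \max_w \min(\dist(w,v),\dist(w,v'))+1$ whenever that maximum is attained by a $w$ using the exchange, and more carefully $r \ge \max_w \dist(w,\{v,v'\})+1$. Dissemination then needs $\max_w \dist(w,\{v,v'\})$ further rounds.

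Let $e$ denote the coordinate along which $v,v'$ differ. For odd $n$, the point antipodal in all coordinates to the edge's midpoint gives $\dist(w,\{v,v'\})=H$: the coordinate-$e$ ring of odd length still has a vertex at distance $\lfloor n/2\rfloor$ from both endpoints. So $S\ge (H+1)+H$, and $S_{\min}=2H$ is forced by the single-node case. For even $n$, the set distance is $H-1$ (in coordinate $e$ the farthest point is at distance $n/2-1$ from the pair). This gives $S\ge H+(H-1)=2H-1$.

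\textbf{Upper construction for even $n$.} We adapt SP-allreduce to root each tree at an edge. Take a BFS forest from the pair $\{\boldsymbol 0, e_1\}$, with depth $H-1$. Reduce both halves toward the two endpoints in $H-1$ rounds, then exchange complementary partials across the edge in round $H$. This is a single transmission each way and keeps $2p-2$ transmissions in total. Disseminate down the two halves in $H-1$ rounds, for $2H-1$ rounds overall. Symmetrize with $\Gamma$ as in Step~2. Lemma~\ref{lem:balance} then gives $\Omega=1$, since every round's load is a $\Gamma$-average. $\Psi=\Theta=1$ hold as in Theorem~\ref{thm:torus}.

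\textbf{Main obstacle.} The delicate step is the exact accounting in the two-node case of the lower bound. We must show that the exchange round is genuinely a separate round that cannot overlap with reduction from distance $H-1$, and that for odd $n$ it cannot save a round. We plan to argue this directly from the timing in Lemma~\ref{lem:equality}: the complementary partials must already be complete at the start of round $r$. This forces $r-1\ge \max_w \dist(w,\{v,v'\})$, and therefore $r \ge \max_w \dist(w,\{v,v'\})+1$.
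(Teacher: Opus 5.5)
Your proposal is correct and follows essentially the same route as the paper: the single-node versus adjacent-exchange dichotomy from Lemma~\ref{lem:equality}, the bounds $S\ge 2H$ and $S\ge 2\rho+1$ with $\rho$ the joint eccentricity of an edge (as computed in Corollary~\ref{cor:torusrounds}), and the symmetrized two-root BFS forest with a central exchange for even $n$ (Proposition~\ref{prop:evensharpness}). One intermediate claim should be dropped: inputs nearer $v'$ need not reach $v$ via $v'$, because the complementary aggregates $A\sqcup B$ need not split the nodes by proximity; the bound you actually need, $r-1\ge\max_w\dist(w,\{v,v'\})$, follows directly from each input lying in one of the two aggregates held at the start of round $r$, exactly as in your final paragraph.
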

\begin{proof}
Fix an arbitrary vector position $i$ and let $r_*$ be the first round after which at least one node holds its complete aggregate. Throughout the proof, all inputs, aggregates, and transmissions refer to this elementary chunk. Each node of $\Tor_n^d$ has eccentricity $H$. For two adjacent nodes, their joint eccentricity---the maximum distance from any node to the nearer of the two---is
\[
 \rho=\begin{cases}H,&n\text{ odd},\\H-1,&n\text{ even}.\end{cases}
\]
These distance identities are proved in Corollary~\ref{cor:torusrounds}.

If a single node first obtains the result, every input must reach it by round $r_*$, so $r_*\geq H$. By Lemma~\ref{lem:dissemination}, at least $H$ further rounds are necessary to deliver the result to all nodes. Hence $S\geq2H$.

If two adjacent nodes first obtain the result, Lemma~\ref{lem:equality} shows that they exchange complementary aggregates in round $r_*$. Every input must therefore reach one of these nodes before the exchange, giving $r_*\geq\rho+1$. Dissemination requires at least $\rho$ further rounds, so $S\geq2\rho+1$. Combining the two cases gives $S\geq2H$ for odd $n$ and $S\geq2H-1$ for even $n$.

Theorem~\ref{thm:torus} establishes that SP-allreduce is fully bandwidth-optimal and takes $2H$ rounds. It attains the bound for odd $n$. For even $n$, a BFS forest with two adjacent roots has depth $H-1$. Reducing each component to its root, exchanging the root aggregates, and disseminating the result along the forest take $2H-1$ rounds in total. Symmetrizing this schedule gives a fully bandwidth-optimal scheme; Proposition~\ref{prop:evensharpness} proves this construction. Thus SP-allreduce takes exactly one more round than the optimum for even $n$.

All fully bandwidth-optimal schemes have the same bandwidth term in Equation~\eqref{eq:factorized}. Their modeled times differ only through $\alpha S$, proving the running-time claims.
\end{proof}

For even $n$, SP-allreduce uses $dn$ rounds, whereas the conventional multiport Bucket count is $2d(n-1)$. The ratio is $2(n-1)/n$, approaching two as $n$ grows. Both schedules have the same bandwidth term in the model; the smaller number of rounds reduces the latency term $\alpha S$.

\section{\texorpdfstring{$k$-SP-allreduce}{k-SP-allreduce} on Tori}\label{sec:khop}
In SP-allreduce, every transmission connects physically adjacent nodes. We now allow nonadjacent nodes to communicate within a single round by routing their transmissions along shortest physical paths. This permits communication trees of smaller depth and therefore fewer rounds. The tree construction retains minimum abstract communication volume, with $\Psi=1$, and the same symmetrization on equal-sided tori gives $\Omega=1$. Using paths longer than one physical link introduces $\Theta>1$. The experiments show that this trade-off extends the range of message sizes for which the tree schedules outperform the compared algorithms.

A rooted spanning tree is a $k$-hop tree if every root-to-leaf path contains at most $k$ tree edges. A tree of actual depth $h(T)\leq k$ requires $h(T)$ reduce-scatter rounds followed by $h(T)$ allgather rounds. Here $k$ bounds the depth of the communication tree, not the physical length of an individual route.

Assign edge $(u,v)$ the weight $\dist_G(u,v)$, and let
\begin{equation}\label{eq:treeweight}
 W(T)=\sum_{uv\in E(T)}\dist_G(u,v).
\end{equation}
Each tree still has $p-1$ edges, so the symmetrized family transmits $2N(p-1)$ bits at the abstract level and $2NW(T)$ bits over physical links. Consequently,
\begin{equation}\label{eq:depththeta}
 \Psi=1,\qquad\Theta=\frac{W(T)}{p-1}.
\end{equation}
On an equal-sided torus, the subgroup from Section~\ref{sec:symmetry} gives $\Omega=1$ by Lemma~\ref{lem:balance}. Substitution into Equation~\eqref{eq:factorized} yields
\begin{equation}\label{eq:depthcost}
 T_{\mathrm{model}}(T)=2\alpha h(T)+\frac{2\beta N}{pD}W(T).
\end{equation}
For a fixed depth bound $k$, minimizing the bandwidth term therefore reduces to finding a minimum-weight rooted $k$-hop spanning tree on the complete graph of network nodes, with shortest-path distances as weights. For trees of the same depth, the latency term is also fixed. We compare different depths in the experiments. The general minimum-weight $k$-hop spanning-tree problem is NP-hard, already for $k=2$~\cite{dahl1998twohop}. We use an integer linear programming (ILP) formulation of the minimum-weight rooted spanning-tree problem with a prescribed depth bound. On $8\times8$, the solvers obtain minimum-weight trees for the tested depth bounds. For related formulations with hop constraints, see~\citep{gouveia1996multicommodity}.

\section{Extensions to Other Topologies}\label{sec:beyondconstruction}
We adapt the SP-allreduce approach to other topologies by choosing a base communication tree, shortest physical routes for its edges, and a subgroup of network automorphisms. We briefly describe the resulting constructions for rectangular tori, HyperX, and spine-leaf networks.

\subsection{Rectangular tori}\label{sec:rectconstruction}
For a two-dimensional rectangular torus $\Tor_{m,n}$, use translations and global reflection:
\[
 \Gamma=\{(x,y)\mapsto(a+\varepsilon x,b+\varepsilon y):
 (a,b)\in\mathbb Z_m\times\mathbb Z_n,\ \varepsilon\in\{+1,-1\}\}.
\]
This subgroup of $\operatorname{Aut}(\Tor_{m,n})$ has $2mn$ elements and two directed-edge orbits: horizontal and vertical links. One base tree need not distribute its traffic equally between these orbits. Two suitably chosen BFS trees suffice.

\begin{theorem}[Two-tree balance on rectangular tori]\label{thm:rectangular}
For every $m,n\geq3$, there exist two BFS trees rooted at $(0,0)$ whose families under $\Gamma$, with $4mn$ equal chunks in total, give a fully bandwidth-optimal allreduce schedule in $2H$ rounds, where $H=\lfloor m/2\rfloor+\lfloor n/2\rfloor$.
\end{theorem}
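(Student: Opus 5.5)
The plan is to reduce full bandwidth optimality to a per-level counting condition on the two base trees, and then build two BFS trees that meet it level by level. $\Psi=1$ and $\Theta=1$ hold exactly as in the proof of Theorem~\ref{thm:torus}. Each of the $4mn$ chunks is routed along a spanning tree with $p-1$ edges, traversed once in each phase, so $\Vol_{\mathrm{abs}}=2N(p-1)$. Every tree edge of a BFS tree is a physical link. Each BFS tree rooted at $(0,0)$ has depth equal to the eccentricity $H=\lfloor m/2\rfloor+\lfloor n/2\rfloor$, so the level schedule takes $H$ rounds per phase, i.e.\ $2H$ in total. It remains to prove $\Omega=1$.

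\textbf{Reducing $\Omega=1$ to a counting condition.} $\Gamma$ has two directed-edge orbits, the $2mn$ horizontal links and the $2mn$ vertical links: translations move any link to any position, and the reflection reverses direction. By Lemma~\ref{lem:balance}, loads are constant within each orbit in every round. So it suffices to show that in each round the total horizontal traffic equals the total vertical traffic. Let $c=N/(4mn)$ be the chunk size. Reduction round $s$ uses the tree edges from vertices at distance $\ell=H-s+1$ to their parents, and the corresponding allgather round uses the same edges reversed. The family of a tree $T$ under $\Gamma$ therefore puts total load $2mn\,c\,h_\ell(T)$ on horizontal links and $2mn\,c\,v_\ell(T)$ on vertical links. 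Here $h_\ell(T)$ and $v_\ell(T)$ count the horizontal and vertical parent edges of level-$\ell$ vertices. Since both orbits have $2mn$ links, the condition needed is
\[
 h_\ell(T_1)+h_\ell(T_2)=v_\ell(T_1)+v_\ell(T_2)=N_\ell\quad\text{for every }\ell,
\]
where $N_\ell$ is the number of vertices at distance $\ell$. The two trees contribute $2N_\ell$ edges at each level, so both sides can equal $N_\ell$.

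\textbf{Constructing the trees level by level.} Write a vertex at distance $\ell$ via its cyclic offsets $(i,\ell-i)$ with $|i|\le\lfloor m/2\rfloor$ and $|\ell-i|\le\lfloor n/2\rfloor$. Classify the vertices into three kinds.
\begin{itemize}
\item Vertices with horizontal offset $0$ must take a vertical parent. There are $a_\ell\le2$ of them.
\item Vertices with vertical offset $0$ must take a horizontal parent. There are $b_\ell\le2$ of them.
\item All other $F_\ell$ vertices may take either a horizontal or a vertical parent toward the root, since both offsets are nonzero. If an offset equals exactly half an even side, there are two parents in that direction, but the direction is unchanged.
\end{itemize}
Across the two trees the forced vertices contribute $2b_\ell$ horizontal and $2a_\ell$ vertical edges. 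The $2F_\ell$ free choices can be split arbitrarily, because the two trees choose independently per vertex. Balance is achievable iff $2b_\ell\le N_\ell$ and $2a_\ell\le N_\ell$, i.e.\ $b_\ell\le a_\ell+F_\ell$ and symmetrically. Any such choice of parents gives valid BFS trees, since each vertex picks a neighbour at distance $\ell-1$.

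\textbf{Main obstacle: verifying the inequality at every level.} This is the step I expect to need the most care, in particular in the regime $\lfloor n/2\rfloor<\ell\le\lfloor m/2\rfloor$ (and its mirror). There the vertical axis contributes nothing ($a_\ell=0$) while $b_\ell$ can be $2$.
\begin{itemize}
\item In that regime, I would use $n\ge3$: vertices $(\pm(\ell-1),\pm1)$ exist, are free, and give $F_\ell\ge2$ (at least $2$ when $m$ is even and $\ell=m/2$, where $\pm$ coincide in the first coordinate, and $4$ otherwise), which is enough.
\item For small $\ell$ with both axis types present, $a_\ell=b_\ell$ handles itself.
\item At $\ell=H$, if both axes are absent, all vertices are free.
\item The boundary levels where one side's offset hits exactly $m/2$ or $n/2$ need a direct count, which I would check case by case on the parities of $m$ and $n$.
\end{itemize}
Once the inequality holds at every $\ell$, assign the free parents greedily to fill the horizontal and vertical quotas. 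Lemma~\ref{lem:balance} then gives $\Omega=1$ and completes the proof.
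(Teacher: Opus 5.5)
Your proposal is correct and is essentially the paper's proof: both reduce $\Omega=1$ to equal horizontal and vertical parent-edge totals per level across the two trees, and both certify feasibility with the same free witnesses $(\pm(\ell-1),\pm1)$. The paper phrases feasibility as $A_\ell,B_\ell\le\lfloor L_\ell/2\rfloor$ and fixes $\lfloor L_\ell/2\rfloor$ horizontal edges in $T_1$ and $\lceil L_\ell/2\rceil$ in $T_2$.

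The boundary levels you defer need no parity case split. For $\ell\ge2$, a nonzero axis count forces $\ell$ to be at most half the corresponding side, so the four free vertices $(\pm(\ell-1),\pm1)$, or their mirrors $(\pm1,\pm(\ell-1))$, are distinct. This includes $\ell=m/2$, where $\ell-1<m/2$, contrary to your parenthetical. Hence $N_\ell\ge5$, so that axis count (at most $2$) is at most $\lfloor N_\ell/2\rfloor$, while $\ell=1$ has $a_1=b_1=2$.
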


The proof is given in Appendix~\ref{app:rect}.

The experiments in Section~\ref{sec:recteval} use this two-tree construction on all three rectangular tori.

\subsection{HyperX}\label{sec:hyperxconstruction}
In a HyperX $n\times n$, each node is connected to the other $2(n-1)$ nodes of its row and column~\cite{ahn2009hyperx,domke2019hyperx}. Its graph is $K_n\mathbin{\square}K_n$ and has diameter two. The base tree has depth two. The root is connected to every other node in its row and column, forming a cross. Nodes in the upper-left and lower-right quadrants attach to the cross node in their column; nodes in the other two quadrants attach to the cross node in their row. This produces the windmill pattern illustrated in Figure~\ref{fig:basetrees}(b).

The subgroup consists of translations of $\mathbb Z_n^2$ and the four rotations by multiples of $90^\circ$. Its directed-edge orbits are indexed by $\min(\Delta,n-\Delta)$, where $\Delta$ is the difference in the changing coordinate modulo $n$.

At each reduction round, the windmill uses edges from every orbit in proportion to its size. For even $n$, the orbit with difference $n/2$ has half as many directed links and receives half as many tree edges. Symmetrization therefore gives equal load on all directed links in every round.

\subsection{Spine-leaf}
We consider a two-tier spine-leaf network based on the Clos architecture~\citep{clos1953study}, in which every leaf switch is connected to every spine switch. Let each leaf have $q$ hosts, let there be $q$ spines and $l$ leaves, and denote host $i$ on leaf $j$ by $h_{i,j}$. We use a shortest-path base tree on the hosts and symmetrize it using the following subgroup of topology automorphisms.

Host and leaf indices are represented by tuples. The shifts add fixed offsets componentwise, with each component taken modulo its number of possible values. The subgroup consists of the simultaneous shifts
\[
h_{i,j}\mapsto h_{i+a,\,j+b},\qquad
L_j\mapsto L_{j+b},\qquad
S_i\mapsto S_{i+a}.
\]
The same host-index shift is applied on every leaf and to the spine indices. Splitting the vector into $ql$ equal chunks and applying these transformations generates one routed tree rooted at each host.

\section{Experimental Evaluation}\label{sec:evaluation}
We evaluate SP-allreduce on square and rectangular tori and $k$-SP-allreduce on square tori. Both are compared with state-of-the-art allreduce algorithms for tori. HyperX and spine-leaf experiments demonstrate the extension of the method to other topologies, using the torus algorithms as references. We examine how closely the shortest-path schedules approach the bandwidth upper bound, how reducing the tree depth changes the range of message sizes at which they outperform the compared algorithms, and how this range shifts with topology shape, link bandwidth, and hop latency.

\subsection{Methodology}\label{sec:methods}
\paragraph{Simulator and setup.}
All runs use the Structural Simulation Toolkit (SST)~\cite{janssen2010simulator}, following the setup of~\cite{de2024swing}. The simulator models routers, links, and queues at packet level. Unless stated otherwise, links carry $400$\,Gb/s and a hop costs $400$\,ns ($100$\,ns of link latency and $300$\,ns of router processing); the reduction itself costs nothing. Each run is one allreduce: every node holds an input vector and starts at the same time. The simulator is deterministic.

The square-torus experiments cover $8\times8$, $16\times16$, $32\times32$, and $64\times64$. Figure~\ref{fig:square} shows absolute goodput on $8\times8$ and $64\times64$; Figure~\ref{fig:envelope}(a) and Table~\ref{tab:depthselection} report results for all four sizes. The rectangular tori have $1024$ nodes and shapes $64\times16$, $128\times8$, and $256\times4$. We also evaluate HyperX $32\times32$ and a spine-leaf fabric with $16$ spines, $32$ leaves, and $16$ hosts per leaf. On $8\times8$, the bandwidth sweep covers $100$, $200$, $400$, and $800$\,Gb/s, and $1.6$ and $3.2$\,Tb/s; the hop-latency sweep covers $200$, $400$, $800$, and $1600$\,ns.

\paragraph{Metric.}
Elements are $4$ bytes, message sizes are powers of two, and the reported time $T$ is the allreduce itself, measured at the last node to finish. For $p$ nodes, each holding an input vector of $N$ bits, we define goodput as
\begin{equation}\label{eq:goodput}
 G=\frac{N(p-1)}{pT}.
\end{equation}
Message sizes are stated in bytes, and goodput in Gb/s or Tb/s.

\paragraph{Goodput upper bound.}
Goodput is bounded above by half the node's aggregate outgoing-link capacity. By Proposition~\ref{prop:minvol}, allreduce sends at least $2N(p-1)/p$ bits per node on average, approximately twice the input-vector size. If each node's outgoing links have total capacity $B_v$, then
\[
 T\geq\frac{2N(p-1)}{pB_v},\qquad G\leq G_{\max}=\frac{B_v}{2}.
\]
On a $d$-dimensional torus, $B_v=2dB$ and $G_{\max}=dB$; on HyperX $n\times n$, $B_v=2(n-1)B$ and $G_{\max}=(n-1)B$. A spine-leaf host has one link of capacity $B$, giving $G_{\max}=B/2$. These bounds give the horizontal upper-bound lines in the figures.

\paragraph{Chunks.}
SP-allreduce splits the vector into $2dn^d$ logical chunks on $\Tor_n^d$. Chunks are scheduling units, not packets: transmissions with the same sender, receiver, route, and round are packed into one message. For the two tori shown in Figure~\ref{fig:square}, the smallest input that gives one four-byte element per chunk is $1$\,KiB on $8\times8$ and $64$\,KiB on $64\times64$.

\paragraph{Baselines.}
On tori, we compare with Bucket~\cite{jain2010optimal,sack2015collective}, Swing~\cite{de2024swing}, and Trivance~\cite{juerss2026trivance}. Ring~\cite{patarasuk2009bandwidth} is also included on rectangular tori. On HyperX and spine-leaf, Bucket, Swing, and Trivance serve as reference algorithms; the spine-leaf panel additionally includes recursive doubling~\cite{rabenseifner2004optimization} and Ring. These two experiments demonstrate the method's applicability beyond tori and its bandwidth utilization. The Swing and Trivance curves use the better of their latency-oriented and bandwidth-oriented variants at each message size.

The $k$-SP-allreduce curve reports the best measured goodput among the tested depths at each message size. Unless stated otherwise, gains are relative to the best compared algorithm at the same message size.

\subsection{SP-allreduce on square tori}\label{sec:envelopeeval}

At each message size, we select the highest measured goodput among the tested depth bounds for $k$-SP-allreduce and compare it with the highest goodput among Swing, Bucket, and Trivance. Figure~\ref{fig:envelope}(a) reports this relative improvement as a percentage. For the tested depth set $\mathcal K$, the percentage gain is
\begin{equation}\label{eq:gain}
 \gain(M)=100\left(
 \frac{\max_{k\in\mathcal K}G_k(M)}
 {\max\{G_{\mathrm{Swing}}(M),G_{\mathrm{Bucket}}(M),G_{\mathrm{Trivance}}(M)\}}-1\right),
\end{equation}
where $M$ is the message size in bytes.

At the default network setting, $k$-SP-allreduce improves goodput over the best of Swing, Bucket, and Trivance by up to $1.19\times$, attained on $16\times16$ at $1$\,MiB. The maximum gains on $8\times8$, $16\times16$, $32\times32$, and $64\times64$ are $8.0\%$, $19.0\%$, $17.7\%$, and $18.3\%$, attained at $1$, $1$, $4$, and $64$\,MiB, respectively. On $16\times16$, the gain exceeds $5\%$ from $128$\,KiB through $32$\,MiB; on $32\times32$, it exceeds $5\%$ from $1$ through $64$\,MiB.

The tested depth bounds, including the BFS reference, are $\mathcal K=\{1,\ldots,8\}$ on $8\times8$, $\{2,\ldots,16\}$ on $16\times16$, $\{2,\ldots,32\}$ on $32\times32$, and $\{1,\ldots,64\}$ on $64\times64$. Table~\ref{tab:depthselection} reports the depth bound attaining the minimum measured completion time at each message size under the default network setting.


\begin{table}[h]
 \centering
 \caption{Depth bound $k$ selected by the measured $k$-SP-allreduce envelope on square tori. Bound $n$ denotes the BFS reference on $n\times n$.}
 \label{tab:depthselection}
 \begin{tabular}{lrrrr}
 \toprule
 Input size & $8\times8$ & $16\times16$ & $32\times32$ & $64\times64$\\
 \midrule
 $64$\,KiB & $1$ & $3$ & $6$ & $10$\\
 $128$\,KiB & $5$ & $3$ & $6$ & $10$\\
 $256$\,KiB & $3$ & $3$ & $6$ & $10$\\
 $512$\,KiB & $5$ & $3$ & $6$ & $10$\\
 $1$\,MiB & $5$ & $3$ & $6$ & $10$\\
 $2$\,MiB & $5$ & $4$ & $6$ & $10$\\
 $4$\,MiB & $5$ & $7$ & $6$ & $10$\\
 $8$\,MiB & $8$ & $7$ & $9$ & $10$\\
 $16$\,MiB & $8$ & $7$ & $11$ & $10$\\
 $32$\,MiB & $8$ & $15$ & $12$ & $11$\\
 $64$\,MiB & $8$ & $16$ & $20$ & $14$\\
 $128$\,MiB & $8$ & $16$ & $22$ & $14$\\
 $256$\,MiB & $8$ & $16$ & $22$ & $14$\\
 \bottomrule
 \end{tabular}
\end{table}

Note that Bucket uses approximately twice as many rounds as SP-allreduce. Thus, an approximately $2\times$ goodput improvement is expected in the small-message regime. At $64$\,KiB, the measured goodput improvements relative to Bucket are $1.74\times$ on $8\times8$ and $2.00\times$ on $64\times64$ (Figure~\ref{fig:square}).

\subsection{Link-bandwidth sensitivity}
Increasing link bandwidth reduces the time spent transmitting a fixed volume, so a larger message is needed before bandwidth savings outweigh communication latency.

Figure~\ref{fig:envelope}(b) compares $k$-SP-allreduce with the best of Swing, Bucket, and Trivance at each message size. Positive goodput gains persist across the tested link rates, with the largest gain reaching $20.6\%$ at $3.2$\,Tb/s and $1$\,MiB.

\subsection{Hop-latency sensitivity}
Increasing physical hop latency makes communication latency more costly, so larger messages are needed for bandwidth savings to compensate for it.

Figure~\ref{fig:envelope}(c) compares $k$-SP-allreduce with the best of Swing, Bucket, and Trivance at each message size. At $1600$\,ns per hop, $k$-SP-allreduce reaches a gain of $20.2\%$ at $512$\,KiB. These measurements concern physical hop latency, which is included in the simulator but not separately modeled by Equation~\eqref{eq:factorized}.

\begin{figure}[tbp]
 \centering
 \includegraphics[width=\linewidth]{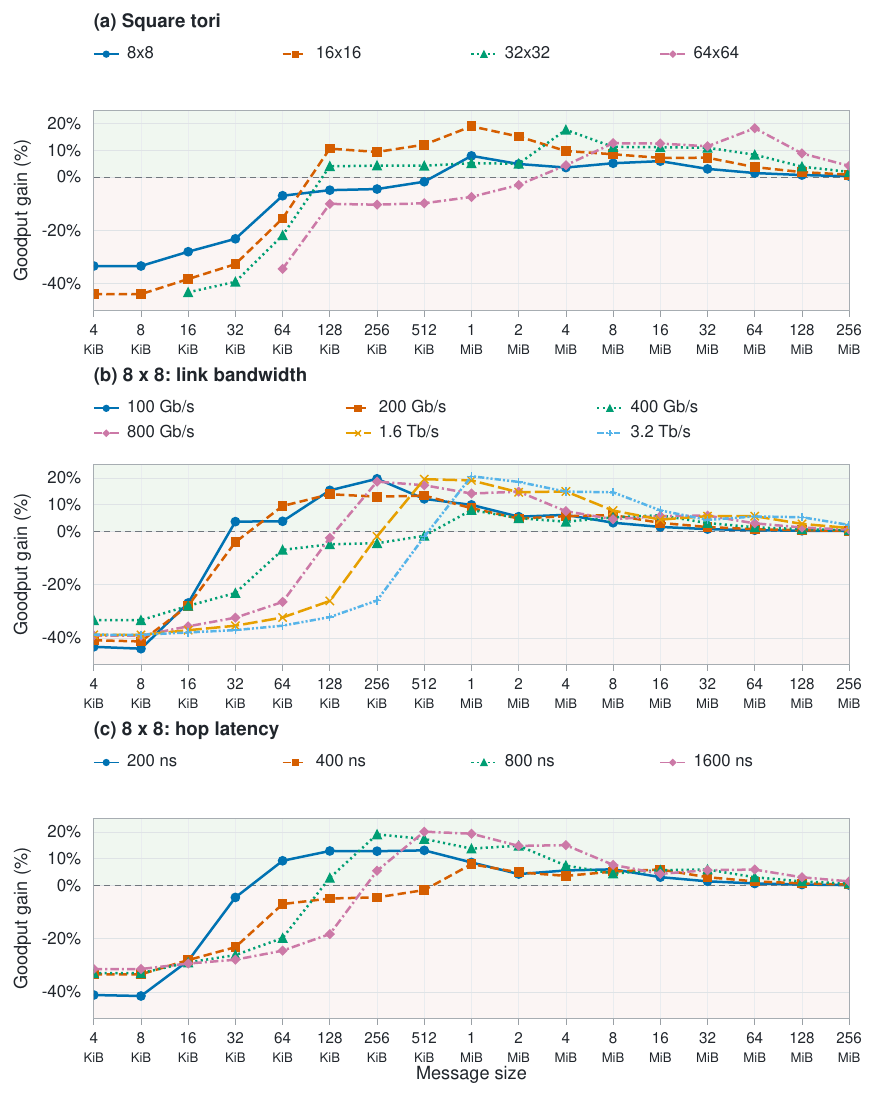}
 \Description{Measured k-SP-allreduce goodput gains on square tori and in bandwidth and hop-latency sweeps.}
 \caption{$k$-SP-allreduce goodput gain over the best of Swing, Bucket, and Trivance: (a) square tori, (b) link-bandwidth sweep on $8\times8$, and (c) hop-latency sweep on $8\times8$.}
 \label{fig:envelope}
\end{figure}

\begin{figure}[tbp]
 \centering
 \includegraphics[width=\linewidth]{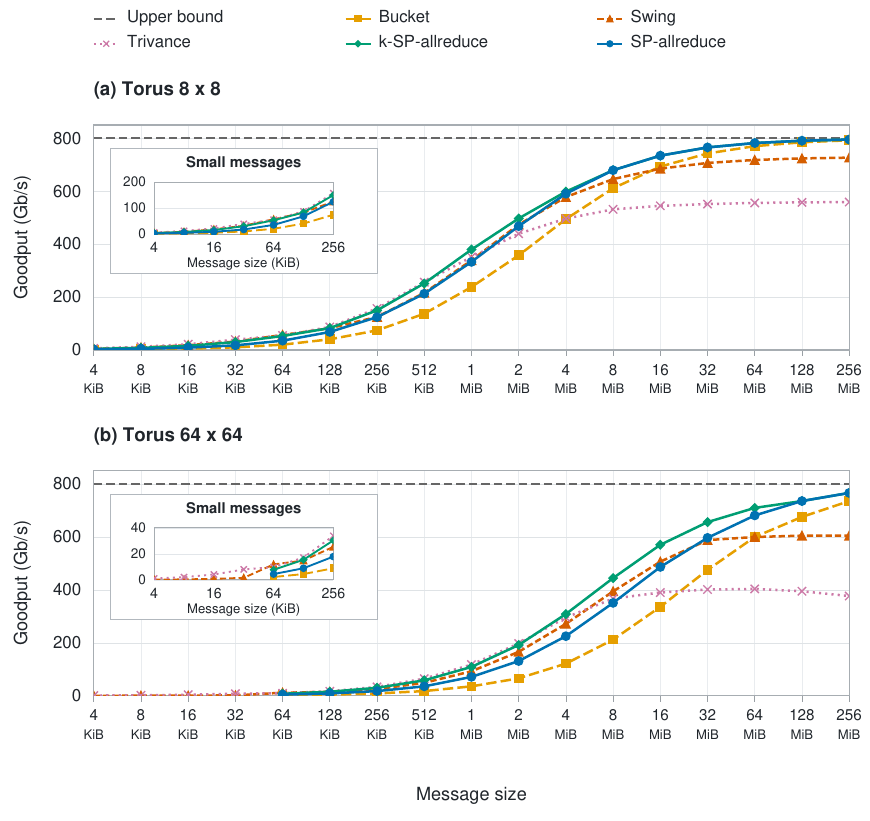}
 \Description{Goodput versus message size on 8 by 8 and 64 by 64 square tori, with insets for small messages.}
 \caption{Goodput on (a) $8\times8$ and (b) $64\times64$ tori at the default link bandwidth and hop latency. Insets show the small-message range on an expanded vertical scale.}
 \label{fig:square}
\end{figure}

\begin{figure}[tbp]
 \centering
 \includegraphics[width=0.985\linewidth]{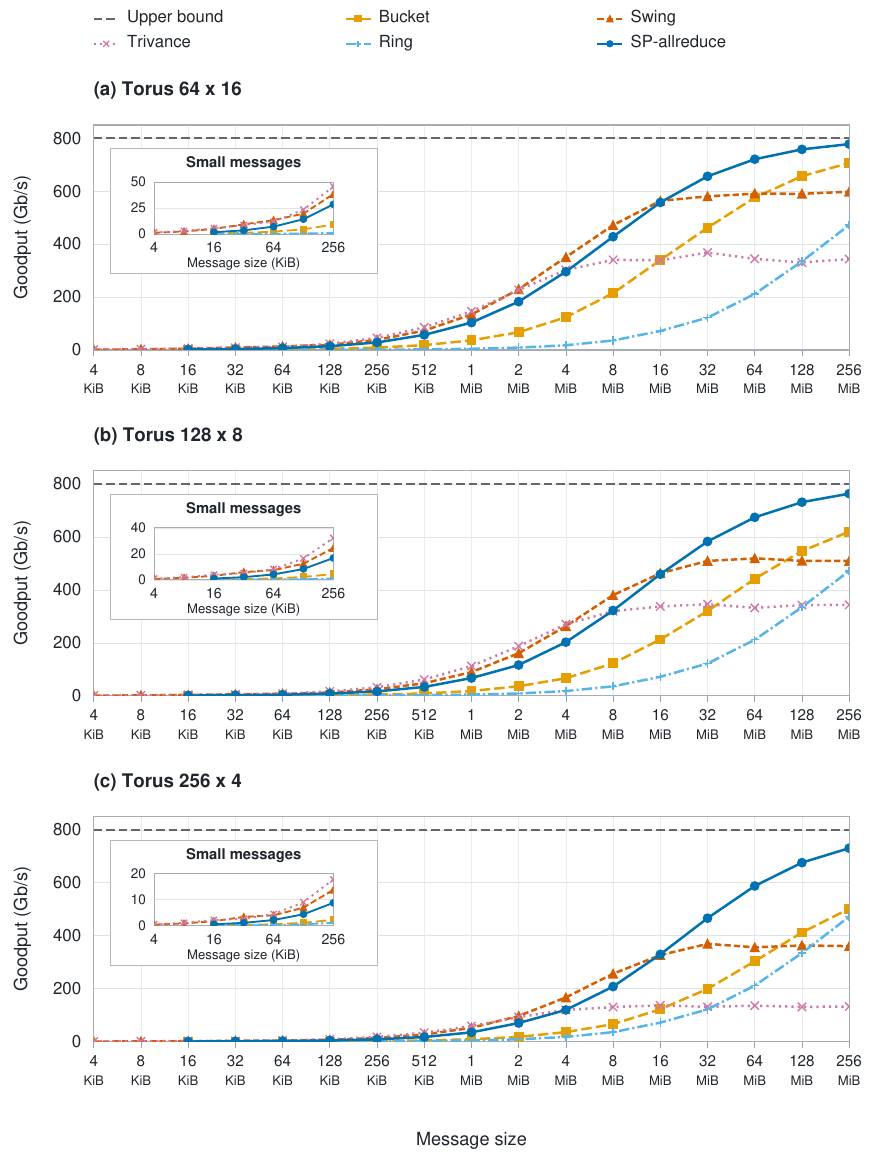}
 \Description{Goodput versus message size on three rectangular tori with 1024 nodes.}
 \caption{Goodput on rectangular tori with $1024$ nodes at the default network setting. SP-allreduce uses the two-tree construction on all three topologies. Insets show the small-message range on an expanded vertical scale.}
 \label{fig:rectangular}
\end{figure}

\subsection{Rectangular tori}\label{sec:recteval}
Figure~\ref{fig:rectangular} compares three $1024$-node tori: $64\times16$, $128\times8$, and $256\times4$. The corresponding SP-allreduce round counts are $80$, $136$, and $260$. All three have the same goodput bound of $800$\,Gb/s at the default link rate. All three experiments use the two-BFS-tree construction of Theorem~\ref{thm:rectangular}, with $\Psi=\Theta=\Omega=1$.

At $256$\,MiB, SP-allreduce reaches approximately $778$, $764$, and $731$\,Gb/s, or $97.2\%$, $95.5\%$, and $91.4\%$ of the common upper bound. The shortest-path construction and equal per-round loads across the horizontal and vertical link classes preserve high bandwidth utilization even on the most elongated torus. The measured values approach the same $800$\,Gb/s reference, rather than a smaller bound imposed by the topology's aspect ratio.

The relative advantage grows with elongation. The maximum SP-allreduce improvement over the best compared algorithm is approximately $1.22\times$ on $64\times16$ at $64$\,MiB, $1.34\times$ on $128\times8$ at $128$\,MiB, and $1.65\times$ on $256\times4$ at $64$\,MiB. At $256$\,MiB, Bucket reaches approximately $706$, $620$, and $502$\,Gb/s, while Swing reaches $598$, $509$, and $361$\,Gb/s. Thus increasing the aspect ratio lowers SP-allreduce goodput at this fixed message size, but lowers the competitors' goodput more strongly. The node count, degree, and bandwidth upper bound remain unchanged, so topology shape is a substantial factor in the comparison.

\subsection{HyperX and spine-leaf}\label{sec:othereval}

\begin{figure}[tbp]
 \centering
 \includegraphics[width=0.94\linewidth]{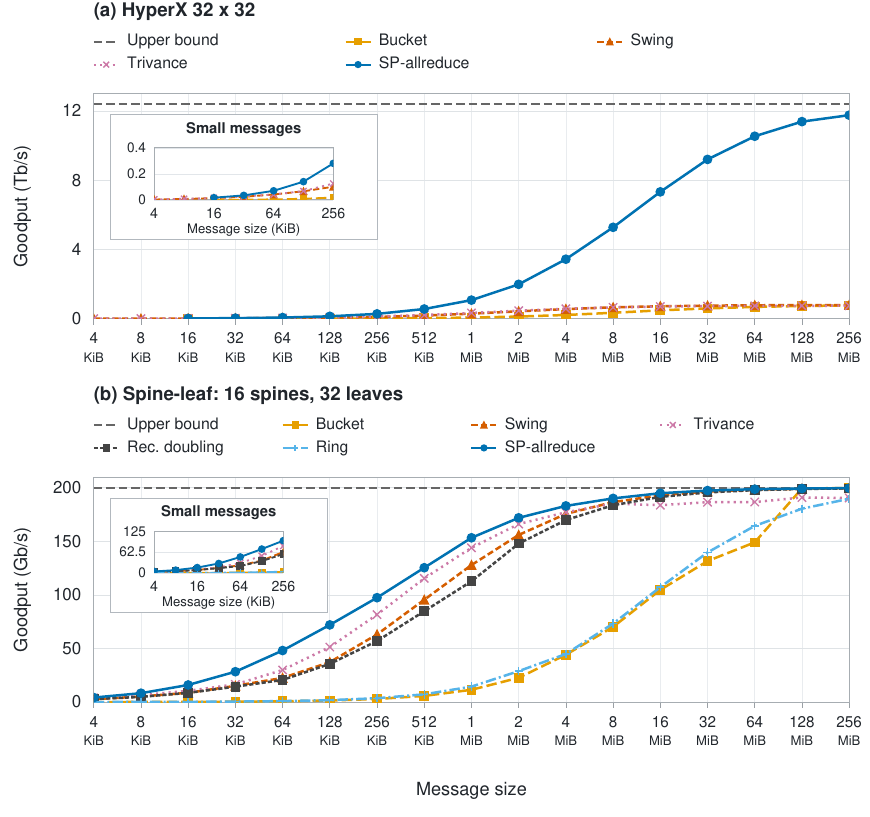}
 \Description{Goodput versus message size on HyperX 32 by 32 and the spine-leaf network with 16 spines and 32 leaves.}
 \caption{SP-allreduce beyond tori: goodput on (a) HyperX $32\times32$ and (b) spine-leaf with $16$ spines, $32$ leaves, and $16$ hosts per leaf. Comparisons use Swing, Trivance, and Bucket, with Ring and recursive doubling also shown on spine-leaf. HyperX values use Tb/s; spine-leaf values use Gb/s. Insets expand the small-message range.}
 \label{fig:beyond}
\end{figure}

We follow Swing~\cite{de2024swing} by conducting experiments on HyperX. We also run Trivance on HyperX using the same approach as for Swing. Figure~\ref{fig:beyond} shows the results on HyperX $32\times32$ and on a spine-leaf fabric with $16$ spines and $32$ leaves. The evaluated baselines are Swing, Trivance, and Bucket, with Ring and recursive doubling also included on spine-leaf.

On HyperX, the $62$ links per node give a goodput upper bound of $12.4$\,Tb/s at the default link bandwidth. SP-allreduce reaches approximately $11.76$\,Tb/s at $256$\,MiB, or $94.9\%$ of this bound. Swing, the fastest compared implementation at this size, reaches $0.794$\,Tb/s, or $6.4\%$ of the same bound. The measured goodput ratio between SP-allreduce and Swing is $14.82\times$.

A limitation of the evaluated Swing, Trivance, and Bucket implementations is that they are poorly suited to exploiting HyperX's connectivity: they use only $4$ of the $62$ links per node in each round, whereas SP-allreduce uses all $62$ links. This difference explains the large goodput gap. In our experiments, Swing reaches approximately $0.794$\,Tb/s, close to its four-link goodput limit of $4B/2=0.800$\,Tb/s at the default link bandwidth $B$.

On spine-leaf, each host has one $400$\,Gb/s link, giving a $200$\,Gb/s goodput upper bound. SP-allreduce reaches approximately $199.7$\,Gb/s at $256$\,MiB, or $99.8\%$ of the bound. At large message sizes, Bucket, Swing, and recursive doubling also approach this limit. SP-allreduce's maximum improvement over the best compared baseline is $1.72\times$ at $32$\,KiB, with approximately $28.4$\,Gb/s versus $16.5$\,Gb/s for Trivance.

\section{Conclusion}\label{sec:conclusion}
On equal-sided tori, symmetrizing a BFS tree under a group acting transitively on directed physical edges yields minimum abstract communication volume, one-hop transmissions, and perfect per-round load balance. The schedule takes $2H=2d\lfloor n/2\rfloor$ rounds. The minimum-transmission argument gives a sharp lower bound of $2H$ for odd side lengths and $2H-1$ for even side lengths. The described construction is optimal in the first case and uses exactly one additional round in the second.

Experimentally, the resulting algorithms improve over the best compared state-of-the-art algorithms for tori by up to $1.19\times$ on square tori and $1.65\times$ on rectangular tori across the tested configurations. On HyperX and spine-leaf, the approach nearly attains the bandwidth upper bound and improves over the best evaluated baselines by up to $14.82\times$ and $1.72\times$, respectively. The analytical guarantees use the homogeneous multiport model of Section~\ref{sec:model}; the simulations also account for physical hop latency.

\label{page:lastmain}
\clearpage
\bibliographystyle{ACM-Reference-Format}
\begingroup
\interlinepenalty=10000
\bibliography{refs}
\endgroup
\clearpage
\appendix

\section{Minimum Transmissions and the Structure of an Allreduce Schedule}\label{app:volume}
\subsection{A one-chunk information-flow model}

Fix one elementary chunk, corresponding to a single position \(i\) of the input vector. Let \(G\) be the communication graph and let
\(V=V(G)\), \(|V|=p\). For a vertex \(v\), let
\[
K_t(v)\subseteq V
\]
denote the set of original inputs at position \(i\) whose aggregate is known at \(v\) after
round \(t\). Initially \(K_0(v)=\{v\}\). In round \(t+1\), a transmission \(u\to v\) sends the aggregate over
\(K_t(u)\), where \(K_t(v)\) denotes the state at vertex \(v\) at the
beginning of the round. At the end of the round, \(v\) forms
\(K_{t+1}(v)\) by taking the union of \(K_t(v)\) and all sets received
in that round. The goal of allreduce for this chunk is to reach
\[
K_T(v)=V\qquad \text{for every }v\in V .
\]

For a state \(\mathcal K=\{K(v):v\in V\}\), define its covering number by
\[
c(\mathcal K):=\min\bigl\{r:\text{ there exist }v_1,\ldots,v_r
\text{ such that }K(v_1)\cup\cdots\cup K(v_r)=V\bigr\}.
\]
Thus initially \(c(\mathcal K_0)=p\), while if some vertex knows the full
aggregate then \(c=1\).

We shall use the following elementary observation. If one individual
transmission is added to a state, then the covering number can decrease
by at most one. Indeed, if after the transmission a cover of size \(c'\)
uses the newly formed set \(A\cup B\), then before the transmission one
could replace it by the two sets \(A\) and \(B\), obtaining a cover of
size at most \(c'+1\).

\subsection{Transmission lower bound}
\begin{proposition}[Transmission lower bound]\label{prop:transmissions}
In the above model, any allreduce schedule for one elementary chunk uses
at least \(2p-2\) transmissions.
\end{proposition}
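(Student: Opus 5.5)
We linearize the schedule and split the count into two phases. Within each round, we order the transmissions arbitrarily and process them one at a time. Each transmission \(u\to v\) still sends the aggregate over \(K_t(u)\), the state at the beginning of the round, so the sent set is a subset of what \(u\) currently knows in the sequential order. This yields a sequence of individual transmissions. In it, every receiver's current set grows by the union with a set that is contained in the sender's current set. Sending a subset can only make knowledge smaller, so the "one transmission decreases the covering number by at most one" observation still applies. Concretely, if the new set at \(v\) is \(A\cup B'\) with \(B'\subseteq B\), any cover using it can be replaced by one using \(A\) and \(B\). We track the covering number \(c\) along this sequence.

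\textbf{Phase 1: reaching the first complete aggregate.} Initially \(c=p\), since all \(K_0(v)=\{v\}\) are singletons and pairwise disjoint. At the moment some vertex first knows \(V\), we have \(c=1\). Each transmission lowers \(c\) by at most one, so at least \(p-1\) transmissions occur before and including the one that first produces a full set.

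\textbf{Phase 2: dissemination.} Consider the count \(f\) of vertices whose set equals \(V\). After Phase 1 ends, \(f=1\), and at the end we need \(f=p\). A single transmission changes only the receiver's set, so it increases \(f\) by at most one. Hence at least \(p-1\) further transmissions follow.

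The two groups of transmissions are disjoint: the first ends with the transmission making \(f\ge1\), and the second comes strictly after it. The total is therefore at least \(2p-2\). The main subtlety is that the sequential ordering must be faithful to the round semantics (simultaneous sends using round-start data), and we handle it by the subset monotonicity above. One could also phrase Phase 2 via a symmetric count rather than \(f\), but \(f\) suffices.
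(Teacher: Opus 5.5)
Your proof is correct, but it takes a different route from the paper's proof of this proposition.

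The paper works round by round. It fixes the first round $\tau$ in which some vertex becomes full, and sets $q=c(\mathcal K_{\tau-1})$ and $r$ equal to the number of vertices full after round $\tau$. It then counts three groups separately:
at least $p-q$ transmissions before round $\tau$, by the covering-number argument;
at least $r(q-1)$ transmissions in round $\tau$, since each newly full vertex must receive $q-1$ sets to complete a cover with its own;
at least $p-r$ transmissions afterwards.
This gives the total $2p-2+(r-1)(q-2)$.

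You instead linearize the whole schedule and split it at the single transmission that first creates a full set. You then use two monotone potentials: the covering number falling from $p$ to $1$, and the number of full vertices rising from $1$ to $p$. Any further vertices that become full in round $\tau$ are absorbed into your second phase, so you avoid the three-part accounting for that round.

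Your argument is essentially the main-text proof of Proposition~\ref{prop:minvol}, with the covering number replacing ``every other node must send its contribution at least once.'' You also state the round-start-payload subtlety (a sent set is only a subset of the sender's current set) more explicitly than the paper does. The one step worth stating outright is that processing with fixed payloads reproduces the true end-of-round states, since all updates are unions. That is what guarantees your sequential process ends with every vertex full.

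What the paper's decomposition buys is the explicit slack term $(r-1)(q-2)$, written in round-level quantities. Tightness then gives $r=1$ or $q=2$, and exactly $p-r$ transmissions after round $\tau$. Lemmas~\ref{lem:equality} and~\ref{lem:dissemination}, and through them the round-count bound of Theorem~\ref{thm:optimality}, rely on exactly these facts.

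Your equality conditions are that every first-phase transmission lowers $c$ by exactly one and every later transmission creates a new full vertex. These depend on an arbitrary ordering within each round, so they would need to be translated back into statements about rounds to serve the same purpose.
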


\begin{proof}
Let \(\tau\) be the first round after which at least one vertex knows the
full aggregate \(V\). Put
\[
q:=c(\mathcal K_{\tau-1})
\]
and let
\[
r:=|\{v\in V:K_\tau(v)=V\}|
\]
be the number of vertices that first know the full aggregate after round
\(\tau\). Since no vertex is full before round \(\tau\), we have \(q\ge 2\).

Before round \(\tau\), the covering number has decreased from \(p\) to
\(q\). Since each individual transmission decreases it by at most one,
there must have been at least \(p-q\) transmissions before round \(\tau\).

Now consider round \(\tau\). Fix a vertex \(v\) with \(K_\tau(v)=V\). If
\(v\) receives \(s_v\) messages in this round, then the \(s_v\) received
sets together with the old set \(K_{\tau-1}(v)\) cover \(V\). Therefore
\(s_v+1\ge q\), and hence \(s_v\ge q-1\). Summing over the \(r\) vertices
that become full in round \(\tau\), the round contains at least
\(r(q-1)\) transmissions.

After round \(\tau\), there are \(p-r\) vertices which do not yet know
the full aggregate. Each of them must receive at least one later
transmission in order ever to become full. Hence there are at least
\(p-r\) transmissions after round \(\tau\).

Altogether, the total number of transmissions is at least
\[
(p-q)+r(q-1)+(p-r)
=2p-q+r(q-2)
=2p-2+(r-1)(q-2).
\]
Since \(r\ge1\) and \(q\ge2\), this is at least \(2p-2\).
\end{proof}

\subsection{Structure in the equality case}
\begin{lemma}[Structure in the equality case]\label{lem:equality}
Suppose an allreduce schedule for one elementary chunk uses exactly
\(2p-2\) transmissions. Let \(\tau\), \(q\), and \(r\) be as in the proof
above. Then, immediately before round \(\tau\), the inclusion-maximal
sets among
\[
\{K_{\tau-1}(v):v\in V\}
\]
are pairwise disjoint, form a partition of \(V\), and each maximal set has
a unique holder.

Moreover, if \(r>1\), then \(q=2\). In this case exactly two vertices can
become full in round \(\tau\), and this can only happen by a bidirectional
exchange of two complementary maximal aggregates. If transmissions are restricted to physical neighbors, the two holders are adjacent in the topology.
\end{lemma}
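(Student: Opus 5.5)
The plan is to revisit the proof of Proposition~\ref{prop:transmissions} and force every inequality in it to be tight. The total is at least $2p-2+(r-1)(q-2)$, obtained by adding three lower bounds. Equality $2p-2$ therefore forces three things. First, there are exactly $p-q$ transmissions before round $\tau$. Second, round $\tau$ contains exactly $r(q-1)$ transmissions, all of them into the $r$ newly full vertices. Third, $(r-1)(q-2)=0$. Since $q\ge2$, the last condition immediately gives the claim that $r>1$ implies $q=2$.

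For the partition claim, track the covering number and the quantity $\Phi(\mathcal K)=\sum_{A\text{ maximal}}|A|$ together with the number of distinct maximal sets. The key intermediate fact I would prove is an invariant: if each transmission before $\tau$ lowers $c$ by exactly one, then the maximal sets stay pairwise disjoint, each with a unique holder. I would argue by induction over transmissions, ordered within rounds as in Proposition~\ref{prop:minvol}. Initially the maximal sets are the singletons, which satisfy the invariant. Suppose the invariant holds with $c$ blocks in the partition. A transmission $u\to v$ replaces $K(v)$ by $K(u)\cup K(v)$. If $K(u)$ and $K(v)$ lie inside distinct blocks $A$ and $B$, then $c$ drops by one only if the new set equals $A\cup B$. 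The resulting state is again a partition into $c-1$ blocks, with a unique holder $v$ provided the old holders lose maximality. The main obstacle is the other possibilities. One is that $K(u)$ or $K(v)$ is non-maximal. Another is that several transmissions in one round use stale data from the beginning of the round. In each such case I would exhibit a cover of the new state of size $\ge c$, so that $c$ does not decrease. The resulting count of at least $p-q+1$ pre-$\tau$ transmissions then contradicts tightness. Uniqueness of holders needs care: two vertices holding the same block is wasteful, because a later sender could have used either. I expect I would instead show that a duplicated holder means some transmission failed to lower $c$. Working directly with the count of at least $p-c$ transmissions needed to reach covering number $c$, applied to the state at $\tau-1$, should give a cleaner route. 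The inequality $c(\mathcal K_{\tau-1})=q$ together with exactly $p-q$ transmissions shows that every transmission was a strict merge of two maximal blocks, and each merge leaves one new holder.

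For the exchange statement, suppose $r>1$, so $q=2$. The state before $\tau$ is then a partition $\{A,B\}$ with unique holders $a$ and $b$. Round $\tau$ has exactly $r(q-1)=r$ transmissions, one into each newly full vertex. A vertex receiving a single message becomes full only if its own set together with the received set covers $V$. This is possible only for the holders of $A$ and $B$ receiving from each other, since every other vertex holds a subset of one block and the message is at most a block. Hence $r=2$ and the round consists of the exchange $a\to b$, $b\to a$. Under $\Theta=1$ these transmissions are physical links, so $a$ and $b$ are adjacent.
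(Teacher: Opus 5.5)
Your route is the paper's: force every inequality in Proposition~\ref{prop:transmissions} to be tight, and run an induction over individually ordered transmissions, with payloads frozen at the start of each round. The invariant is that the maximal sets are disjoint blocks with unique holders. You then settle $r>1$ via $(r-1)(q-2)=0$. Your tightness bookkeeping is correct, including that round $\tau$ carries exactly $r(q-1)$ transmissions, all into newly full vertices. Your $r>1$ exchange argument is also correct and matches the paper.

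The gap is the step that actually carries the partition claim. Under the invariant, a transmission lowers $c$ only if it sends a whole block $P_i$ to a recipient whose current set is a whole different block $P_j$. You defer this as the ``main obstacle'', and the method you propose is backwards. Exhibiting \emph{a} cover of the new state of size $\ge c$ says nothing about $c$, which is a minimum; you must show that \emph{every} cover has at least $c$ members. Your ``cleaner route'' does not close the gap either. Counting only shows that each pre-$\tau$ transmission lowers $c$ by exactly one. Passing from that to ``strict merge of two maximal blocks'' is precisely the deferred case analysis. (The potential $\Phi$ plays no role.)

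The missing observation is short:
\begin{itemize}
\item Under the invariant every set lies in a single block, so $c$ equals the number $h$ of blocks.
\item Since sets only grow, a frozen payload $S$ lies in the sender's current set, hence in some block $P_i$. The recipient's current set satisfies $K(v)\subseteq P_j$. This disposes of stale data.
\item If $i=j$, every set still lies in one block and $c$ is unchanged.
\item If $i\ne j$, the new set $S\cup K(v)$ is the only set meeting two blocks. So a cover needs a separate member inside each $P_k$ with $k\ne i,j$. If $S\subsetneq P_i$, a point of $P_i\setminus S$ forces a member inside $P_i$, and $P_j$ forces one more, giving at least $h$ members. The case $K(v)\subsetneq P_j$ is symmetric.
\end{itemize}
Hence $c$ drops only when $S=P_i$ and $K(v)=P_j$. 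Then $v$ is the sole holder of $P_i\cup P_j$, because beforehand no vertex held a set meeting two blocks.

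This also removes your uniqueness worry without any ``wastefulness'' argument. A second holder of an existing block can arise only from a transmission with $i=j$. Such a transmission does not lower $c$ and is excluded by tightness.
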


\begin{proof}
Since the schedule uses exactly \(2p-2\) transmissions, all inequalities
in the proof of the proposition must be tight. In particular, before
round \(\tau\) there are exactly \(p-q\) transmissions, and these
transmissions reduce the covering number from \(p\) to \(q\). Hence every
one of these transmissions reduces the covering number by exactly one.

We now prove by induction over the individual transmissions before round
\(\tau\) that the maximal aggregates are pairwise disjoint and have
unique holders. To avoid any ambiguity about simultaneous transmissions,
fix an arbitrary ordering of the transmissions inside each round, but
keep the payload of every transmission equal to the aggregate known by
the sender at the beginning of that round. Processing the transmissions
one by one with these fixed payloads gives the same state at the end of
the round, since all updates are unions.

Initially the maximal sets are the singletons \(\{v\}\), and each has a
unique holder. Suppose that before some individual transmission the
maximal sets are
\[
P_1,\ldots,P_h,
\]
pairwise disjoint, covering \(V\), and each \(P_i\) has a unique holder.
Then \(c=h\). Consider a transmission whose payload \(S\) is contained in
\(P_i\), and whose recipient currently holds a set contained in \(P_j\).

We claim that this transmission can decrease \(c\) only if
\[
S=P_i,\qquad K(v)=P_j,\qquad i\ne j,
\]
where \(v\) is the recipient. Indeed, if \(i=j\), the transmission only
combines information already contained in the same maximal component and
cannot reduce the number of components needed to cover \(V\). If
\(S\subsetneq P_i\), then the part \(P_i\setminus S\) still has to be
covered by the unique maximal aggregate \(P_i\); the new set does not
replace \(P_i\). Similarly, if the recipient holds a proper subset of
\(P_j\), then the old maximal aggregate \(P_j\) is still needed. Thus the
covering number cannot drop unless the whole component \(P_i\) is sent to
the unique holder of a different whole component \(P_j\).

In that case the transmission replaces the two maximal components
\(P_i\) and \(P_j\) by their union \(P_i\cup P_j\). The new maximal
components are
\[
P_i\cup P_j,\quad P_k\ (k\ne i,j),
\]
which are again pairwise disjoint and cover \(V\). The holder of
\(P_i\cup P_j\) is unique, namely the recipient. The old holder of
\(P_i\) may still possess \(P_i\), but \(P_i\) is now a proper subset of
the maximal set \(P_i\cup P_j\), so it is no longer a maximal aggregate.
Thus the induction continues.

Since every pre-\(\tau\) transmission must decrease the covering number,
the induction applies to all transmissions before round \(\tau\). Hence,
just before round \(\tau\), the maximal aggregates are pairwise disjoint,
form a partition of \(V\), and have unique holders.

It remains to analyze the first-full round. Exact optimality also gives
\[
2p-2=2p-2+(r-1)(q-2),
\]
and therefore
\[
(r-1)(q-2)=0.
\]
Thus either \(r=1\), or \(q=2\).

Assume \(r>1\). Then \(q=2\), so immediately before round \(\tau\) there
are exactly two maximal aggregates, say
\[
A\sqcup B=V,
\]
with unique holders \(a\) and \(b\). Each vertex that becomes full in
round \(\tau\) must receive at least \(q-1=1\) message, and by tightness
it receives exactly one relevant message. For one received aggregate
together with its old aggregate to cover \(V=A\sqcup B\), the vertex must
already hold one of \(A,B\) and receive the other. Since \(A\) and \(B\)
have unique holders, only \(a\) and \(b\) can become full in this round.
Thus \(r\le2\). If \(r=2\), then \(a\) receives \(B\) from \(b\), and
\(b\) receives \(A\) from \(a\), so the two holders exchange their
complementary aggregates; under one-hop routing, this exchange uses a physical edge. This proves the final
claim.
\end{proof}

\subsection{The number of rounds with one-hop transmissions}
For the remainder of the proof, $G$ is the physical topology and $\Theta=1$: every transmission follows a path of length one, so only adjacent network nodes communicate in a round.

\begin{definition}
For a set \(S\subseteq V(G)\), write
\[
\operatorname{ecc}(S):=\max_{x\in V(G)} \operatorname{dist}_G(x,S).
\]
In particular,
\[
\operatorname{ecc}(v):=\operatorname{ecc}(\{v\}),
\qquad
\operatorname{ecc}(uv):=\operatorname{ecc}(\{u,v\})
\]
for an edge \(uv\in E(G)\). \end{definition}

\begin{lemma}[Dissemination after the first complete aggregate]\label{lem:dissemination}
Consider the one-chunk model with one-hop transmissions. Suppose an allreduce
schedule uses exactly \(2p-2\) transmissions. Let \(\tau\) be the first
round after which some vertex knows the full aggregate \(V\), and let
\[
F_\tau:=\{v\in V:K_\tau(v)=V\}.
\]
Then every transmission after round \(\tau\) is sent by a vertex that is
already full at the beginning of that round, and its recipient becomes
full at the end of that round.

Consequently, every vertex \(y\) that is not full at time \(\tau\) receives
the full aggregate along a causal path starting from \(F_\tau\). In
particular, if the schedule finishes after \(T\) rounds, then
\[
\operatorname{dist}_G(y,F_\tau)\le T-\tau
\qquad\text{for every }y\in V.
\]
\end{lemma}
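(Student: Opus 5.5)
We derive the lemma from the tightness of the counting in the proof of Proposition~\ref{prop:transmissions}. Equality \(2p-2\) forces every inequality there to be tight. In particular, the number of transmissions after round \(\tau\) is exactly \(p-r=|V\setminus F_\tau|\). Our goal is to show that each of these transmissions is ``useful'': its sender is full, and its recipient is a not-yet-full vertex that thereby becomes full.

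\textbf{Step 1: each non-full vertex receives exactly one later transmission.} Every vertex \(y\notin F_\tau\) must receive at least one transmission after round \(\tau\) in order to become full. There are only \(p-r\) such transmissions and \(p-r\) such vertices. Hence each non-full vertex receives exactly one, no vertex of \(F_\tau\) receives any, and no vertex receives a second one. Now fix \(y\notin F_\tau\). By round \(\tau\) it is not full, and afterwards it receives a single message. It becomes full only if \(K(y)\) united with the received aggregate equals \(V\).

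\textbf{Step 2: the unique message must be a full aggregate sent by a full vertex.} We argue by contradiction. Suppose \(y\) receives the aggregate \(S\ne V\) from sender \(u\) in round \(t>\tau\).
\begin{itemize}
\item If \(K_{t-1}(y)\cup S=V\), then \(y\) is full afterwards.
\item Otherwise \(y\) never becomes full, which is impossible.
\end{itemize}
So consider the first case. Take any vertex \(w\in F_\tau\) with \(w\ne y\); one exists since \(r\ge1\). The sets \(K_{t-1}(y)\) and \(S\) are then both proper subsets of \(V\) that are combined only at \(y\). We want to rule this out. The cleanest route is to track the sender: \(S=K_{t-1}(u)\) is a proper subset, so \(u\) is not full at the beginning of round \(t\).
\begin{itemize}
\item If \(u\notin F_\tau\), then \(u\) must later receive its own unique transmission and become full. \(u\) receives exactly one message overall after \(\tau\), so whatever it holds before that message is just its round-\(\tau\) state.
\item The inputs outside \(K_\tau(y)\) can only come to \(y\) via messages. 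Those outside \(K_\tau(u)\) can only come to \(u\) via messages.
\end{itemize}
Since each non-full vertex receives only one message in total, \(K_{t-1}(y)=K_\tau(y)\).

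The counting can be sharpened into a clean statement. Consider the pre-\(\tau\) structure from Lemma~\ref{lem:equality}: the maximal sets just before \(\tau\) form a partition with unique holders. Each non-full vertex's state at \(\tau\) is contained in a single part. Call it "not full", since full vertices are exactly the holders that merged. If \(q\ge2\) parts remain unmerged at a non-full vertex, its set together with a non-full sender's set lies in at most two parts. We then check, using that the round-\(\tau\) merge consumed the holders' aggregates, that the union cannot equal \(V\) unless \(S=V\).

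This case analysis is the step we expect to be the main obstacle. The plan is to prove by induction on rounds \(t>\tau\) that every non-full vertex at time \(t-1\) still has \(K_{t-1}(y)=K_\tau(y)\ne V\). Its union with any non-full sender's set is likewise \(\ne V\), so full recipients require full senders. The base case \(t=\tau+1\) uses the partition structure from Lemma~\ref{lem:equality} together with \(q\ge2\).

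\textbf{Step 3: conclusion.} Once every post-\(\tau\) transmission goes from a full vertex to a newly full neighbor, induction on rounds gives the consequence. A vertex that becomes full in round \(\tau+k\) is reached by a chain of \(k\) transmissions starting in \(F_\tau\). Each link of the chain is a physical edge, since \(\Theta=1\). Hence \(\dist_G(y,F_\tau)\le k\le T-\tau\).
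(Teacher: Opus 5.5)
Your outline follows the paper's route, and Steps~1 and~3 are fine. Tightness gives each non-full vertex exactly one post-$\tau$ transmission. So a vertex that is not full at the start of a round $t>\tau$ still holds $K_\tau$. The partition from Lemma~\ref{lem:equality} is then meant to rule out a non-full sender, and the distance bound follows along the causal chain.

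\textbf{The gap.} The decisive claim of Step~2 is that $K_\tau(y)\cup K_\tau(u)\neq V$ for any two non-full vertices $y,u$. You defer it (``we then check\ldots'', ``the main obstacle''). The ingredient you name for the base case, the partition structure together with $q\ge 2$, does not suffice. Take $q=2$ and $r=1$, with $V=A\sqcup B$, holders $a,b$, and $a$ becoming full by receiving $B$ from $b$. Then $b$ stays non-full and still holds the whole part $B$. So the round-$\tau$ merge did not ``consume'' the holders' aggregates, and two single-part sets such as $A$ and $B$ do cover $V$. You remark that the full vertices are ``the holders that merged'', but you neither prove this nor use it, and it is exactly what is needed.

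\textbf{How to close it.} Use tightness of the round-$\tau$ count $r(q-1)$, not just the post-$\tau$ count. It forces each $w\in F_\tau$ to receive exactly $q-1$ messages in round $\tau$, and no other vertex to receive anything in that round. Two consequences follow.
\begin{itemize}
\item Every non-full $y$ has $K_\tau(y)=K_{\tau-1}(y)$, which lies in a single part. You assert this without justification.
\item The $q$ single-part sets covering $V$ at $w$ must be exactly the $q$ parts. So every $w\in F_\tau$ is the unique holder of some part $P_i$, and no non-full vertex holds all of $P_i$.
\end{itemize}
Now split into cases. If $q\ge 3$, two sets each inside one part cannot cover $V$. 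If $q=2$, covering $V$ would require a non-full vertex holding the whole part whose unique holder lies in $F_\tau$, which is impossible. This handles $r=1$ and $r=2$ uniformly, and it is the paper's case analysis.

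Once this is in place, your induction over rounds is unnecessary. The frozen-state observation already holds at every $t>\tau$, so the non-covering claim applies to every post-$\tau$ transmission directly.
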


\begin{proof}
Let
\[
q:=c(\mathcal K_{\tau-1}),\qquad r:=|F_\tau|.
\]
From the transmission lower-bound proof, the number of transmissions is at
least
\[
(p-q)+r(q-1)+(p-r)
=
2p-2+(r-1)(q-2).
\]
Since the schedule uses exactly \(2p-2\) transmissions, all inequalities
in that proof are tight. In particular, after round \(\tau\) there are
exactly \(p-r\) transmissions.

Every vertex outside \(F_\tau\) must receive at least one transmission
after round \(\tau\), since otherwise it can never learn the full
aggregate. There are exactly \(p-r\) such vertices and exactly \(p-r\)
post-\(\tau\) transmissions. Hence each vertex outside \(F_\tau\) receives
exactly one post-\(\tau\) transmission, and no vertex that is already full
can receive a post-\(\tau\) transmission. Moreover, the unique
post-\(\tau\) transmission received by a vertex outside \(F_\tau\) must
make it full immediately.

It remains to show that the sender of such a transmission must already be
full. Let
\[
P_1,\ldots,P_q
\]
be the inclusion-maximal aggregates immediately before round \(\tau\).
By the equality-case structure, these sets are pairwise disjoint, form a
partition of \(V\), and each \(P_i\) has a unique holder.

Consider a post-\(\tau\) transmission \(u\to v\) whose recipient \(v\) is
not full at the beginning of the round. Since \(v\notin F_\tau\) and has
not yet received its unique post-\(\tau\) transmission, its current
aggregate is still the same as at time \(\tau\), and hence is contained in
one of the \(P_i\)'s. The same is true for \(u\) if \(u\) is not full at
the beginning of the round.

We claim that two non-full vertices cannot together cover \(V\). Indeed,
if \(r=1\), then one of the unique holders of the \(P_i\)'s became full in
round \(\tau\). If \(q=2\), the complementary maximal aggregate is still
held by its unique holder, but the other maximal aggregate is held only by
the full vertex; hence no two non-full vertices hold complementary
maximal aggregates. If \(q\ge3\), the union of two sets each contained in
single parts of the partition \(P_1,\ldots,P_q\) cannot cover \(V\).
If \(r>1\), then exact optimality gives \(q=2\), and the two unique holders
of the two maximal aggregates both become full in round \(\tau\). Thus any
non-full vertex holds a proper subset of one of the two parts, and again
two non-full vertices cannot cover \(V\).

Therefore, if the sender \(u\) were not full, then the union of the
payload sent by \(u\) and the old aggregate at \(v\) would still be a
proper subset of \(V\). This contradicts the fact that \(v\)'s unique
post-\(\tau\) transmission must make \(v\) full. Hence the sender is full
at the beginning of the round.

Thus fullness propagates after round \(\tau\) only by transmissions from
already full vertices to not-yet-full vertices. Since information can
traverse at most one edge per round, every vertex made full after
\(\tau\) lies on a causal path starting from \(F_\tau\), of length at most
the number of remaining rounds. Hence
\[
\operatorname{dist}_G(y,F_\tau)\le T-\tau
\]
for every \(y\in V\).
\end{proof}

\begin{corollary}[Vertex and edge eccentricities on equal-sided tori]\label{cor:torusrounds}
On $\Tor_n^d$, every node has eccentricity $H=d\lfloor n/2\rfloor$, and every adjacent pair has joint eccentricity $H-1$ for even $n$ and $H$ for odd $n$.
\end{corollary}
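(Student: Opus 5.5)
The plan is to reduce both claims to the cycle $C_n$, using that the torus metric is the $\ell_1$-type sum of cyclic distances. For $x,y\in\mathbb Z_n^d$ the shortest-path distance is
\[
\dist(x,y)=\sum_{i=1}^d\delta(x_i-y_i),\qquad \delta(a)=\min(a\bmod n,\;n-(a\bmod n)).
\]
This holds because each edge changes exactly one coordinate by $\pm1$. So a path must make at least $\delta(x_i-y_i)$ steps in coordinate $i$, and moving coordinatewise along shortest arcs attains the sum.

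For vertex eccentricity, the torus is vertex-transitive under translations, so it suffices to treat $\boldsymbol 0$. Each term satisfies $\delta(a)\le\lfloor n/2\rfloor$, with equality at $a=\lfloor n/2\rfloor$. Taking $x=(\lfloor n/2\rfloor,\dots,\lfloor n/2\rfloor)$ therefore gives $\operatorname{ecc}(\boldsymbol 0)=d\lfloor n/2\rfloor=H$.

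For an adjacent pair, I use the group $\Gamma$ from Section~\ref{sec:symmetry}, which is directed-edge transitive. This lets me assume the pair is $u=\boldsymbol 0$ and $v=e_1$. For any $x$,
\[
\dist(x,\{u,v\})=\min\bigl(\delta(x_1),\delta(x_1-1)\bigr)+\sum_{i\ge2}\delta(x_i),
\]
since the coordinates other than the first agree for $u$ and $v$. The maximum therefore splits as $(d-1)\lfloor n/2\rfloor$ plus the one-dimensional quantity
\[
m=\max_{a\in\mathbb Z_n}\min(\delta(a),\delta(a-1)),
\]
which is the joint eccentricity of an edge in the cycle $C_n$.

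The main computation, and the only nontrivial step, is evaluating $m$. The plan is to view $C_n$ with the edge $\{0,1\}$ removed as a path of $n$ vertices starting at $1$ and ending at $0$ (going $1,2,\dots,n-1,0$). The distance from an intermediate vertex to the nearer endpoint is then the distance along this path. The farthest vertex sits in the middle, at distance $\lfloor (n-1)/2\rfloor$. Cycle shortcuts do not help, since any route from $a$ to $\{0,1\}$ must leave through one of the two path directions.

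For odd $n$ this gives $m=(n-1)/2=\lfloor n/2\rfloor$. For even $n$ it gives $m=n/2-1=\lfloor n/2\rfloor-1$. As a check, $n=4$ gives $m=1$ and $n=3$ gives $m=1$. Adding back the $(d-1)\lfloor n/2\rfloor$ from the other coordinates yields $H$ for odd $n$ and $H-1$ for even $n$.
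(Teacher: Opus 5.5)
Your proposal is correct and follows essentially the same route as the paper. Both arguments use additivity of distances over the Cartesian product to reduce to $C_n$, where a vertex has eccentricity $\lfloor n/2\rfloor$ and an edge has joint eccentricity $\lfloor (n-1)/2\rfloor$, that is, $n/2-1$ for even $n$ and $(n-1)/2$ for odd $n$. The differences are cosmetic: you normalize the edge to $\{\boldsymbol 0,e_1\}$ via the symmetry group where the paper treats a general coordinate $j$, and you justify the one-dimensional value explicitly by deleting the edge to obtain a path.
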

\begin{proof}
Distances in a Cartesian product add, and every node of $C_n$ has eccentricity $\lfloor n/2\rfloor$. Thus every node of $\Tor_n^d$ has eccentricity $H$. Consider adjacent nodes differing in coordinate $j$. In each of the other $d-1$ coordinates, the largest distance contribution is $\lfloor n/2\rfloor$. In coordinate $j$, the maximum distance to the nearer endpoint of an edge of $C_n$ is $n/2-1$ for even $n$ and $(n-1)/2$ for odd $n$. Adding these contributions gives $H-1$ and $H$, respectively.
\end{proof}

\subsection{Sharpness on even equal-sided tori}\label{app:sharpness}
\begin{proposition}[Attaining the even-side bound]\label{prop:evensharpness}
For even $n\geq4$, there is a fully bandwidth-optimal allreduce schedule on $\Tor_n^d$ taking $2H-1$ rounds, where $H=dn/2$.
\end{proposition}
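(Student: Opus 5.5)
Construct one base schedule from an edge-rooted BFS forest, then symmetrize it with the group $\Gamma$ of Section~\ref{sec:symmetry}, and check $\Psi$, $\Theta$, $\Omega$ and the round count.

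\textbf{Base schedule.} Let $a=\boldsymbol 0$ and $b=e_1$ be the adjacent roots. Assign each vertex $x$ to the nearer of $a,b$. Coordinate~1 splits $\mathbb Z_n$ into two halves of size $n/2$, and other coordinates are irrelevant, so there are no ties and the two cells $A\ni a$, $B\ni b$ partition the vertices. Build BFS trees $T_A$ and $T_B$ inside the cells. A nearest neighbor of $x$ toward its root lies in the same cell, because distance to that root drops by one and distance to the other root changes by at most one. By Corollary~\ref{cor:torusrounds}, each tree has depth at most $\operatorname{ecc}(ab)=H-1$.

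\textbf{Rounds and volume.} In rounds $1,\dots,H-1$, vertices at depth $H-s$ send to their parents, as in Step~3. In round $H$, $a$ and $b$ exchange their aggregates over the physical edge $ab$. In rounds $H+1,\dots,2H-1$, the result is disseminated down both trees in reverse level order. This gives $2H-1$ rounds. The chunk uses $(p-2)$ tree edges twice plus $2$ exchange transmissions, so $2p-2$ transmissions in total. Hence $\Psi=1$. Every transmission is along a physical edge, so $\Theta=1$.

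\textbf{Symmetrization.} Apply all $\gamma\in\Gamma$, splitting the vector into $2dn^d$ equal chunks, and keep rounds unchanged. Lemma~\ref{lem:balance} only needs a base construction, meaning routes plus a round for each transmission. It does not need a single tree, so it applies verbatim to the forest-plus-exchange schedule. Directed-edge transitivity of $\Gamma$ then gives $\Omega=1$ in every round, including the exchange round.

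\textbf{Main obstacle.} The delicate step is verifying that the base schedule is a valid allreduce. Each root must hold its cell's complete aggregate before round $H$. The exchange must give both roots the full aggregate. Dissemination must reach every vertex in $H-1$ further rounds. I would prove each by induction on depth, using the partition property: no vertex is counted in both trees, so $A\sqcup B=V$ and the exchanged aggregates are complementary.

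Depths are genuinely $H-1$, not less, so vertices at smaller depth simply idle in early rounds. The schedule is still valid. Since the per-round load analysis is purely orbit-based, uneven use of levels causes no imbalance.
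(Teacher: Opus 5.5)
Your proposal is correct and follows essentially the same route as the paper's proof. Both build a BFS forest of depth $H-1$ rooted at two adjacent nodes, run level-order reduction in $H-1$ rounds, use one exchange round, disseminate in reverse, and symmetrize the whole schedule (including the exchange round) under $\Gamma$ via Lemma~\ref{lem:balance}. Your parity argument, that for even $n$ no vertex is equidistant from $a$ and $b$ so the two cells partition the vertex set and the exchanged aggregates are complementary, makes explicit what the paper takes directly from the forest structure.
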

\begin{proof}
Let $n$ be even and $H=d n/2$. Choose adjacent nodes $a,b$ of $\Tor_n^d$. Their joint eccentricity is $H-1$, as established in Corollary~\ref{cor:torusrounds}. A BFS forest rooted at $\{a,b\}$ assigns every other node a parent whose distance to this root set is one smaller. It has two components, $p-2$ edges, and depth $H-1$.

Reduce each component to its root in $H-1$ rounds, using $p-2$ transmissions. The roots then exchange their complementary aggregates in one round, using two transmissions. Both now know the complete result. Disseminating it along the forest takes another $H-1$ rounds and $p-2$ transmissions. The total is $2H-1$ rounds and $2p-2$ transmissions, all between physical neighbors.

Apply the torus subgroup from Section~\ref{sec:symmetry} to this entire schedule, including the forest, the central exchange, and all round assignments, and assign equal chunks to its copies. The proof of Lemma~\ref{lem:balance} depends only on the transformed round loads, so it also applies to this schedule. It gives $\Omega=1$, while the transmission count and one-hop routes give $\Psi=\Theta=1$. Thus the even-side lower bound is attained within the fully bandwidth-optimal class. This construction establishes sharpness of the bound; the SP-allreduce algorithm evaluated in the experiments remains the $2H$-round construction of Section~\ref{sec:symmetry}.

\end{proof}

\section{Two-Tree Balance on Rectangular Tori}\label{app:rect}
\begin{proof}[Proof of Theorem~\ref{thm:rectangular}]
Fix the root $(0,0)$ in $\Tor_{m,n}$, with $m,n\geq3$, and let $S_\ell$ be the set of nodes at distance $\ell$ from the root. Write $L_\ell=|S_\ell|$. A BFS tree chooses, for each node in $S_\ell$, a parent in $S_{\ell-1}$.

Let $A_\ell$ count the nodes of $S_\ell$ with $y=0$, and $B_\ell$ those with $x=0$. These nodes must choose a horizontal and a vertical parent edge, respectively. Each of the remaining $C_\ell=L_\ell-A_\ell-B_\ell$ nodes can choose either direction, by decreasing its distance in that coordinate.

We first show that
\begin{equation}\label{eq:rectquota}
 A_\ell,B_\ell\leq\lfloor L_\ell/2\rfloor.
\end{equation}
For $\ell=1$, both counts are two and $L_1=4$. For $\ell\geq2$, each axis contributes at most two nodes. If $A_\ell>0$, then $\ell\leq\lfloor m/2\rfloor$, and there are at least four mixed-coordinate nodes with coordinate distances $\ell-1$ and $1$. Indeed, the first distance is strictly below $m/2$ and has two choices, and distance one on the other cycle also has two choices. Therefore $L_\ell\geq5$ whenever $A_\ell>0$, which implies the first inequality. The same argument with the coordinates exchanged proves the second. If the relevant axis count is zero, its inequality is immediate.

In $T_1$, choose exactly $\lfloor L_\ell/2\rfloor$ horizontal parent edges from layer $S_\ell$; in $T_2$, choose exactly $\lceil L_\ell/2\rceil$. Both choices are feasible by Equation~\eqref{eq:rectquota}: the number of horizontal edges can be any integer from $A_\ell$ to $L_\ell-B_\ell$. The remaining edges are vertical. Choices can be made independently at each layer because every parent edge decreases distance to the root. They therefore define two spanning BFS trees, both of depth $H=\lfloor m/2\rfloor+\lfloor n/2\rfloor$.

Together, the two trees have exactly $L_\ell$ horizontal and $L_\ell$ vertical parent edges from layer $S_\ell$. Symmetrize both trees under the subgroup $\Gamma$ of translations and global reflection from Section~\ref{sec:rectconstruction}. There are $2mn$ transformations and two directed-edge orbits, each with $2mn$ edges. Divide the input into $4mn$ equal chunks, one per tree and transformation. In reduction round $s=H-\ell+1$, Lemma~\ref{lem:balance} makes the load constant within each orbit. The total traffic in each orbit is
\[
 \frac{N}{4mn}\,|\Gamma|L_\ell=\frac{N L_\ell}{2},
\]
so every directed link in either orbit carries $N L_\ell/(4mn)$ bits. The reversed allgather rounds have the same balance. Hence $\Omega=1$. Each tree uses exactly $p-1$ edges per phase, all between adjacent nodes, giving $\Psi=\Theta=1$. The schedule has $2H$ rounds, completing the proof.

\end{proof}

\section{Deficiencies of the Torus Algorithms}\label{app:baselinedeficiencies}
All volumes are in bits; $p=n^d$ and the torus has $2dp$ directed links. We assume equal-sized chunks and shortest coordinate routes. Swing is analyzed for $n=2^m$, $m\geq2$, and Trivance for $n=3^t$. These radix assumptions are specific to each algorithm and do not extend the formulas to arbitrary side lengths.

SwingB and SwingL split the input into $2d$ equal parts and run one subcollective on each, with pairs using opposite directions and rotated across coordinates. Trivance splits the input into $d$ equal-sized subvectors and runs one subcollective on each, using two communication directions. These are the multiport variants used in the experiments. Let $K_2=dm$ and $K_3=dt$ for their respective radix cases.

For Swing, define
\[
 a_k=\frac{2^{k+1}+(-1)^k}{3},\qquad
 A_m=\sum_{k=0}^{m-1}a_k=\frac{4n-3-(-1)^m}{6}.
\]
The message-size-weighted physical path lengths are $F_2(n,d)$ for SwingB with $n=2^m$ and $F_3(n,d)$ for TrivanceB with $n=3^t$:
\begin{align}
 F_2(n,d)&=\frac{1-2^{-d}}{1-n^{-d}}
       \sum_{k=0}^{m-1}a_k2^{-dk},\label{eq:baselinef2}\\
 F_3(n,d)&=\frac{1-3^{-d}}{1-n^{-d}}
       \sum_{k=0}^{t-1}3^{(1-d)k}.\label{eq:baselinef3}
\end{align}
The following derivations use the stated routing and multiport assumptions and count communication volumes and route lengths.

\subsection{SP-allreduce and Bucket}
SP-allreduce and the balanced multiport Bucket schedule~\citep{jain2010optimal,sack2015collective} both transmit the minimum abstract volume $2N(p-1)$. Every transmission connects physical neighbors, and all directed links have equal load in each round. Thus both have $\Psi=\Theta=\Omega=1$. Theorem~\ref{thm:torus} proves these properties for SP-allreduce.

\subsection{SwingB}
In reduce-scatter round $s\in\{0,\ldots,K_2-1\}$, each of the $2d$
subcollectives sends a message of size
\[
 b_s=\frac{N}{2d\,2^{s+1}}
\]
per node, at distance $a_{\lfloor s/d\rfloor}$. Each node therefore sends
$N/2^{s+1}$ bits in this round. The reverse allgather has the same volumes
and path lengths, giving
\[
 \Vol_{\rm abs}=2pN\sum_{s=0}^{K_2-1}2^{-s-1}
 =2pN(1-2^{-K_2})=2N(p-1).
\]
Hence $\Psi=1$. Weighting by physical distance gives
\[
 \Theta=\frac{\sum_{s=0}^{K_2-1}
             a_{\lfloor s/d\rfloor}2^{-s-1}}{1-1/p}.
\]
Grouping $s=dk+j$, $0\leq j<d$, and using
$\sum_{j=0}^{d-1}2^{-j-1}=1-2^{-d}$ gives~\eqref{eq:baselinef2}.

\subsection{SwingL}
Each of the $2d$ subcollectives sends $N/(2d)$ bits per node in each of
$K_2$ rounds, so every node sends $N$ bits in total per round. Consequently
\[
 \Vol_{\rm abs}=pNK_2,\quad
 \Psi=\frac{pdm}{2(p-1)},\quad
 \Theta=\frac{d\sum_ka_k}{dm}=\frac{A_m}{m}.
\]

\subsection{TrivanceB and TrivanceL on \texorpdfstring{$n=3^t$}{n=3\string^t}}
In reduce-scatter round $s$, each node sends two messages per subcollective,
each of size $N/(d3^{s+1})$, over distance $3^{\lfloor s/d\rfloor}$.
There are $d$ subcollectives. Thus
\[
 \Vol_{\rm abs}=4pN\sum_{s=0}^{K_3-1}3^{-s-1}
 =2pN(1-3^{-K_3})=2N(p-1),
\]
which proves $\Psi=1$, and
\[
 \Theta=\frac{2\sum_{s=0}^{K_3-1}
        3^{\lfloor s/d\rfloor}3^{-s-1}}{1-1/p}.
\]
Grouping $s=dk+j$ yields~\eqref{eq:baselinef3}.

For TrivanceL, each of the two messages contains the entire subvector, of size $N/d$ bits. The abstract communication volume is $2pNK_3$, hence
\[
 \Psi=\frac{pdt}{p-1},\qquad
 \Theta=\frac{1}{t}\sum_{k=0}^{t-1}3^k=\frac{n-1}{2t}.
\]
The factor two in the per-node traffic is essential: the same subvector
is sent to both peers, not divided between them.

For fixed $d>1$, the weighted sums also give
\[
 \lim_{m\to\infty}F_2(2^m,d)=
 \frac{2^d(2^d-1)}{(2^d-2)(2^d+1)},\qquad
 \lim_{t\to\infty}F_3(3^t,d)=
 \frac{3^d-1}{3^d-3}.
\]
These limits follow by summing geometric series. For $d=2$, they are $6/5$ and $4/3$, respectively, along the two different radix sequences.

\end{document}